\newtheorem{theorem}{Theorem}[section]
\newtheorem{lemma}[theorem]{Lemma}
\theoremstyle{definition}
\newtheorem{definition}[theorem]{Definition}
\newtheorem{conjecture}[theorem]{Conjecture}
\theoremstyle{remark}
\newtheorem{remark}[theorem]{Remark}
\numberwithin{equation}{section}
\begin{document}
\title{On Twisted Virasoro Operators And Number Theory}
\author{An Huang}
\address{Department of Mathematics, UC Berkeley}
\email{anhuang@berkeley.edu}
\date{June 14, 2009}

\begin{abstract}
\noindent We explore some axioms of divergent series and their relations with conformal field theory. As a consequence we obtain another way of calculating $L(0,\chi)$ and $L(-1,\chi)$ for $\chi$ being a Dirichlet character. We hope this discussion is also of interest to physicists doing renormalization theory for a reason indicated in the Introduction section. We introduce a twist of the oscillator representation of the Virasoro algebra by a group of Dirichlet characters and use this to give a 'physical interpretation' of why the values of certain divergent series should be given by special L values. Furthermore, we use this to show that some fractional powers which are crucial for some infinite products to have peculiar modular transformation properties are expressed explicitly by certain linear combinations of $L(-1, \chi)$'s for appropriately chosen $\chi$'s, and can be understood physically as a kind of 'vacuum Casimir energy' in our settings. We also note a relation between field theory and our twisted operators. Lastly we give an attempt to reinterpret Tate's thesis by a sort of conformal field theory on a number field.
\end{abstract}
\maketitle
\section*{Introduction}
Some of the results of divergent series are summarized in Hardy's book  \textsl{Divergent Series} \cite{Hardy}. Hardy proposed 3 axioms on manipulating divergent series, however, his 3rd axiom is not applicable in many cases, i.e. we can't obtain answers to the series $1+1+1+...$ or $ 1+2+3+...$. In section 1, we first show how to use two other axioms (We call them axiom (1) and axiom (2) in this paper) mentioned to me by Borcherds to sum up these two series which give 'correct' answers given by zeta values. By using some analytic arguments, we prove axiom (2) alone gives values to some divergent series which agree with corresponding special Dirichlet L values. This is summarized in Theorem 1.1. Then we discuss the question of consistency of these two axioms. We also reformulate some of the results by nonstandard analysis and make a conjecture on the analytic continuation of some general Dirichlet L series to $s=0$ (conjecture 1.7). After that, we give a physical reasoning of why axiom (1) combined with axiom (2) possibly gives the 'correct' answer for $0+1+2+3+...$. To a physicist, our discussions on divergent series is possibly interesting because from a physical point of view, we are attempting to argue that 'no matter what regularization scheme one uses', if one agrees certain innocently looking axioms at work, one will always get the same answer for some divergent series including the famous $0+1+2+3+...$ that shows up a lot in physics. (for some reason, the explict discussion of this point is given at the end of section 2.) Included in our discussion also we show how amusingly axiom (2) for divergent series is related with the well known explicit Dirichlet class number formula for imaginary quadratic fields $\mathbb{Q}(\sqrt{-q})$, where $q$ is an odd prime congruent to $3$ mod $4$. These are done in section 1.

The oscillator representation of the Virasoro algebra appears in bosonic string theory as components (or modes) of the string world sheet energy momentum tensor. In this context, it is well known that there is a physical reasoning of why $1+2+3+...=-1/12$, by considering the vacuum Casimir energy of the world sheet conformal field theory. In order to get a similar physical interpretation of $\sum_{i=1}^{\infty}\chi(i)i=L(-1, \chi)$ for Dirichlet characters $\chi$, we introduce a twist of the oscillator representation by a group of Dirichlet characters. It will turn out that in this way, we get a representation of a direct sum of several copies of Virasoro algebras sharing a same central element, on the same Fock space. Consequently, we may indeed interpret $\sum_{i=1}^{\infty}\chi(i)i=L(-1, \chi)$ by considering a sort of vacuum Casimir energy just as in the well known case of $1+2+3+...=-1/12$. These will be done in section 2.

Furthermore, the 'q-trace' (or character) of the oscillator representation of the Virasoro algebra gives the essential constituent of the 1-loop vacuum partition function of the corresponding conformal field theory, which possesses certain modular transformation properties predicted by the $\operatorname{SL}(2,\mathbb{Z})$ symmetry of the defining lattice of an elliptic curve over $\mathbb{C}$. \cite{Zhu} is a fundamental paper devoted to giving a mathematical formulation and proof of such phenomenon by the theory of vertex operator algebras, which, in particular, implies that the characters of minimal model representations with negative central charges $c_{2,2k+1}$ have certain modular transformation properties. See for example \cite{Milas}. Now, in our setting of the oscillator representation twisted by a group of Dirichlet characters (or slightly more generally, a group of certain periodic functions from natural numbers to complex numbers), we show a similar story: certain classes of 'q-trace' can be expressed by certain theta functions with characteristics. Consequently, they have certain modular transformation properties by the theory of theta functions. Along the way we also recover exactly the characters of $c_{2,2k+1}$ as 'q-traces' in this different setting. The highlight is \eqref{3.28} and the discussions around it: we see those strange fractional powers which are crucial for certain infinite series to have peculiar modular transformation properties are expressed explicitly in terms of linear combinations of $\L(-1, \chi)$'s for certain $\chi$'s. They can be interpreted physically as vacuum Casimir energies in our settings, and mathematically they come from special L values. We think this advances our understanding of those fractional powers. These will be done in section 3. Our notations for theta functions will be in accordance with \cite{theta}.

By using class field theory, we canonically associate a twisted oscillator representation with a totally real finite abelian extension of $\mathbb{Q}$. From an algebraic number theoretic point of view, all the above construction is only for the rational numbers, since only Dirichlet L functions show up. A natural question is whether one can generalize some of these to more general number fields. We don't have an answer to this question, but we will try to give some hints of the difficulties involved of trying to do this in a more or less direct way. These will be done in section 4.

At the end of this paper, in section 5, we give an attempt of reinterpreting Tate's thesis by a sort of conformal field theory on a number field. We hope this paper gives some indication that there are possibly many things one can try to relate number theory and physics in this 'traditional' direction.
\begin{remark}
I should thank Antun Milas for mentioning to him that most of our constructions for twisted Virasoro operators in section 2 have been obtained independently in \cite{Bloch} or \cite{Milas2}. I received Antun's email about this immediately after the original version of this paper was posted on arxiv. Also, in \cite{Milas} Milas defined certain q-series $A_i(q)$ which are supposed to be linear combinations of Eisenstein series of weight 2 twisted by Dirichlet characters. This should be related with \eqref{3.28} as he pointed out. Also, some ideas concerning our lemma 4.3 already exist in the literature.
\end{remark}
\section{Some Axioms for Divergent Series}
Hardy's axioms for summing up divergent series (of course, they are only applicable to certain spaces of series which we will not mention) are linearity together with\\
Axiom 3 of Hardy: 
\begin{equation}
a_0+a_1+a_2+...=a_0+(a_1+a_2+...)
\end{equation}
However, with these axioms, it's very easy to see that we can't assign a finite complex number to the series $1+1+1+...$ or $1+2+3+...$.
For convenience, let's take ${b_i}$ to be the partial sum series of ${a_i}$, namely, 
\begin{equation}
b_i=a_1+a_2+...+a_i
\end{equation} 
Then $\lim b_1, b_2, b_3,...$ is another way to denote $a_0+a_1+a_2+...$.

Richard Borcherds mentioned to me the following two axioms to substitute for axiom 3 of Hardy:\\
Axiom (1)
\begin{equation*}
\lim b_1, b_2, b_3,...=\lim b_1, b_1, b_2, b_2, b_3, b_3,... 
\end{equation*}
Axiom (2)
\begin{equation*}
\lim b_1, b_2, b_3,...=\lim c_1, c_2, c_3,...               
\end{equation*}
where $c_i=\frac{b_1+b_2+...+b_i}{i}$ is the arithmetic average series of ${b_i}$. We should keep in mind that whenever a series has a finite limit in the usual sense, we take that value as the answer.\\
It should be noted that axiom (2) has been used long time ago. One may consult Cesaro summation for some details. We hope our following discussion brings some new perspective on these axioms for divergent series. But before doing anything serious, let's begin with some practice calculations with these axioms.

First, let's calculate $0+1+1+1+...$ and $0+1+2+3+...$ with these axioms:
\begin{align*}
s
&=0+1+1+1+...\\
&=\lim 0,1,2,3,...\\
&=\lim 0,0,1,1,..., \text{by axiom (1)}
\end{align*}
Now apply the linearity axiom, and use $\lim 0,1,2,3,...$ to subtract $2$ times $\lim 0,0,1,1,...$, we get
\begin{equation*}
s=-\lim 0,1,0,1,0,1,...
\end{equation*}
Apply axiom(2), we get $s=-\frac{1}{2}$.

Next, we calculate $1-2+3-4+...$ using only axiom (2) and linearity:
\begin{align*}
1-2+3-4+...
&=\lim 1, -1, 2, -2,...\\
&=\lim 1, 0, 2/3, 0, 3/5, 0,...\\
&=\frac{1}{4}, \text{by applying axiom (2) twice}
\end{align*}

The same calculation yields $s_1$=$0+1-2+3-4+...$=$\frac{1}{4}$.
Now we put axiom (1) back into play and calculate $s_2$=$0+1+2+3+...$ (This is just a heuristic calculation. Later we will discuss the problem of consistency of axioms (1) and (2) which implies a correct understanding of this calculation.):
\begin{align*}\label{**}
s_2-s_1\tag{**}
&=0+0+4+0+8+0+...\\
&=\lim 0, 0, 4, 4, 12, 12,...\\
&=\lim 0, 4, 12,...\text{by axiom (1)}\\
&=0+4+8+... \\
&=4s_2
\end{align*}
so $s_2$=$-\frac{1}{12}$.

We see from these examples that these axioms give values to certain divergent series agree with those given by special zeta values. However, with a little calculation, one finds:

axiom (2) also gives $1-4+9-16+...=\frac{1}{8}$,
But zeta values would give the answer as $0$. The explanation of this phenomenon is given in remark 1.3. Before this let's calculate by axiom (2) the value of a class of divergent series given by (special values of) Dirichlet L-series:
\begin{theorem}
Let $\chi$ be a nontrivial Dirichlet character with conductor $N$. Then axiom (2) together with linearity give values of $\sum_{i=1}^{\infty}\chi(i)$ and $\sum_{i=1}^{\infty}\chi(i)i$ agree with $L(0,\chi)$ and $L(-1,\chi)$ respectively.
\end{theorem}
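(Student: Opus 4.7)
The plan is to handle the two assertions separately, using the period-$N$ structure of $\chi$ to reduce each one to a finite character-sum identity. Throughout set $A := \sum_{a=1}^{N} a\chi(a)$ and recall the standard closed forms $L(0,\chi) = -A/N$ and $L(-1,\chi) = \tfrac{1}{2}A - \tfrac{1}{2N}\sum_{a=1}^{N}a^{2}\chi(a)$ coming from the generalized Bernoulli numbers $B_{1,\chi}$ and $B_{2,\chi}$. What has to be produced is exactly these two numbers, purely from axiom~(2) and linearity.

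For $\sum_{i=1}^{\infty}\chi(i)$, let $b_{n} = \sum_{i=1}^{n}\chi(i)$. Since $\chi$ is nontrivial, $\sum_{a=1}^{N}\chi(a)=0$, hence $b_{n+N} = b_{n}$ and the partial sums are purely periodic. The Cesàro averages $(b_{1}+\cdots+b_{n})/n$ then converge to the mean of one period $\tfrac{1}{N}\sum_{j=1}^{N} b_{j}$, and swapping the order of summation gives $\tfrac{1}{N}\sum_{j=1}^{N} b_{j} = -A/N = L(0,\chi)$. So axiom~(2) assigns the value directly.

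For $\sum_{i=1}^{\infty}\chi(i) i$, let $T_{n}$ be the partial sums and write $n = kN+j$ with $0 \le j < N$. Grouping into blocks of length $N$ gives
\[
T_{kN+j} = kA + kN b_{j} + T_{j}^{(0)}, \qquad T_{j}^{(0)} := \sum_{r=1}^{j} r\chi(r),
\]
so $T_{n}$ grows linearly with a residue-dependent slope, and neither $T_{n}$ nor its first Cesàro average converges. The key cancellation is that, summed over a full residue cycle $j = 1, \ldots, N$, the $k$-dependent contributions kill each other because $\sum_{j=1}^{N} b_{j} = -A$; this forces $\sum_{n=1}^{kN} T_{n} = k\sum_{j=1}^{N} T_{j}^{(0)}$. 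Consequently the first Cesàro average $(T_{1}+\cdots+T_{n})/n$, evaluated along $n = kN+\ell$, tends as $k \to \infty$ to a finite limit $\mu(\ell)$ depending only on $\ell$, so this Cesàro sequence is asymptotically $N$-periodic. Applying axiom~(2) a second time, the iterated Cesàro average converges to the mean $\tfrac{1}{N}\sum_{\ell = 0}^{N-1} \mu(\ell)$.

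Finally, plugging in the explicit formula for $\mu(\ell)$ and invoking the identities $\sum_{j=1}^{N} T_{j}^{(0)} = (N+1)A - \sum_{r=1}^{N} r^{2}\chi(r)$ and $\sum_{j=1}^{N} j\, b_{j} = \tfrac{1}{2}(A - \sum_{r=1}^{N} r^{2}\chi(r))$ (both by swapping the order of summation), all intermediate terms collapse and the mean simplifies to $\tfrac{1}{2}A - \tfrac{1}{2N}\sum_{r=1}^{N} r^{2}\chi(r) = L(-1,\chi)$. The main obstacle is the second series: one has to notice that the residue-dependent slope of $T_{n}$ cancels globally on one full period, so that although a single Cesàro step fails to converge, iterating axiom~(2) once more does converge and produces exactly the desired special value. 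Without spotting this cancellation one might wrongly conclude that the axioms assign $\sum \chi(i) i$ no finite value at all.
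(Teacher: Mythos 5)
Your proposal is correct and follows essentially the same route as the paper's first proof: both start from the Bernoulli-polynomial closed forms for $L(0,\chi)$ and $L(-1,\chi)$, obtain $L(0,\chi)$ from a single Ces\`aro average of the periodic partial sums, and obtain $L(-1,\chi)$ by observing that the first Ces\`aro average of $\sum_i \chi(i)i$ is asymptotically $N$-periodic in $l \bmod N$ and then averaging over one period (the paper's $\frac{1}{N}\sum_{b=1}^{N}$ is exactly your second application of axiom (2)). Your block decomposition $T_{kN+j}=kA+kNb_j+T_j^{(0)}$ and the identities for $\sum_j T_j^{(0)}$ and $\sum_j j\,b_j$ just organize the paper's brute-force double-sum computation more cleanly; the substance is the same.
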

\begin{remark}
In fact, we can see from the following proof by direct calculation that this is true for more general function $\chi$: $\textbf{N}\rightarrow\mathbb{C}$ having a period $N$, and satisfying $\sum_{k=1}^{N}\chi(k)=0$. Also note that we can add a finite number of zeros in front of these divergent series without affecting the result. 
\end{remark}
\begin{proof}
Let us first copy the well known formula expressing $L(1-n,\chi)$ by Bernoulli polynomials as in \cite{Hida}. Also, it is easily seen that this formula is also true for the more general $\chi$ described as in the remark above.
\begin{equation}
L(1-n,\chi)=-\sum_{a=1}^{N}\chi(a)N^{n-1}\frac{B_n(\frac{a}{N})}{n}
\end{equation}
where $B_n$ is the $n$th Bernoulli polynomial. The relevant $B_n$'s for us are:\\
$
B_0=1\\
B_1=x-\frac{1}{2}\\
B_2=x^2-x+\frac{1}{6}
$\\
So in particular, we have very explicit formulas for $L(0,\chi)$ and $L(1,\chi)$:
\begin{equation}\label{L0}
L(0,\chi)=\sum_{k=1}^{N}-\frac{k}{N}\chi(k)+\frac{1}{2}\sum_{k=1}^{N}\chi(k)
\end{equation}
\begin{equation}\label{L1}
L(-1,\chi)=\sum_{k=1}^{N}-\frac{k^2}{2N}\chi(k)+\frac{1}{2}\sum_{k=1}^{N}k\chi(k)-\frac{N}{12}\sum_{k=1}^{N}\chi(k)
\end{equation}

Next we calculate the values of these divergent series by our axioms and compare the results with the above.
\begin{align*}
\sum_{i=1}^{\infty}\chi(i)
&=\lim \chi(1),\chi(1)+\chi(2),\chi(1)+\chi(2)+\chi(3),...\\
&=\lim_{k\rightarrow\infty}\frac{k\chi(1)+(k-1)\chi(2)+...+\chi(k)}{k} \text{(We will see this limit exists in the usual sense)}
\end{align*}
Write $k=aN+b$, where $0\leq b<N$. Then it's easy to see that the above limit equals the limit of the subseries consisting only of values of $k$ which are integer multiples of $N$. Use the periodicity of $\chi$, a little calculation yields:\\
The above limit= $\lim_{a\rightarrow\infty}\frac{[\frac{(a+1)aN}{2}+a]\sum_{k=1}^{N}\chi(k)-a\sum_{k=1}^{N}k\chi(k)}{aN}$.\\
Since $\sum_{k=1}^{N}\chi(k)=0$, this equals $-\frac{\sum_{k=1}^{N}k\chi(k) }{N}$, which equals $L(0,\chi)$ by \eqref{L0}.

Now we calculate $\sum_{i=1}^{\infty}\chi(i)i$ by these axioms:
\begin{align*}
\sum_{i=1}^{\infty}\chi(i)i
&=\lim_{k\rightarrow\infty}\sum_{i=1}^{k}\chi(i)i\\
&=\lim_{l\rightarrow\infty}\frac{1}{l}\sum_{k=1}^{l}\sum_{i=1}^{k}\chi(i)i\\
&=\lim_{l\rightarrow\infty}\frac{1}{l}\sum_{i=1}^{l}\sum_{k=i}^{l}\chi(i)i\\
&=\lim_{l\rightarrow\infty}\frac{1}{l}\sum_{i=1}^{l}(l+1-i)\chi(i)i
\end{align*}
Write $l=aN+b$, where $0<b\leq N$. We will show that as $a\rightarrow\infty$, the limit $\lim_{a\rightarrow\infty}\frac{1}{N}\sum_{b=1}^{N}\frac{1}{l}\sum_{i=1}^{l}(l+1-i)\chi(i)i$ exists in the usual sense, and equals $L(-1,\chi)$. Then by our axioms we have $\sum_{i=1}^{\infty}\chi(i)i$=$L(-1,\chi)$.\\
To this end, we calculate this limit by brute force:
\begin{align*}
\frac{1}{l}\sum_{i=1}^{l}(l+1-i)\chi(i)i
&=-\frac{1}{l}\sum_{i=b+1}^{N}\sum_{k=0}^{a-1}(kN+i)^2\chi(i)-\frac{1}{l}\sum_{i=1}^{b}\sum_{k=0}^{a}(kN+i)^2\chi(i)\\&+\frac{l+1}{l}\sum_{i=b+1}^{N}\sum_{k=0}^{a-1}(kN+i)\chi(i)+\frac{l+1}{l}\sum_{i=1}^{b}\sum_{k=0}^{a}(kN+i)\chi(i)\\
&=-\frac{1}{l}\sum_{i=b+1}^{N}[\frac{1}{6}(a-1)a(2a-1)N^2+a(a-1)iN+ai^2]\chi(i)\\&-\frac{1}{l}\sum_{i=1}^{b}[\frac{1}{6}a(a+1)(2a+1)N^2+(a+1)aiN+(a+1)i^2]\chi(i)\\&+\frac{l+1}{l}\sum_{i=b+1}^{N}[\frac{1}{2}a(a-1)N+ai]\chi(i)\\&+\frac{l+1}{l}\sum_{i=1}^{b}[\frac{1}{2}(a+1)aN+(a+1)i]\chi(i)
\end{align*}
Keeping in mind to take the limit $a\rightarrow\infty$, and the assumption that $\sum_{k=1}^{N}\chi(k)=0$, we have 
\begin{align*}
\lim_{a\rightarrow\infty}\frac{1}{N}\sum_{b=1}^{N}\frac{1}{l}\sum_{i=1}^{l}(l+1-i)\chi(i)i
&=\frac{1}{N}\sum_{b=1}^{N}(b+1)\sum_{i=1}^{b}\chi(i)+\frac{1}{N}\sum_{b=1}^{N}\frac{1+b+N}{N}\sum_{i=1}^{N}i\chi(i)-\frac{1}{N}\sum_{i=1}^{N}i^2\chi(i)\\&-\frac{1}{N}\sum_{b=1}^{N}\sum_{i=1}^{b}i\chi(i)\\
&=\frac{1}{N}\sum_{i=1}^{N}\sum_{b=i}^{N}(b+1)\chi(i)+\frac{1}{N}(1+N+\frac{1+N}{2})\sum_{i=1}^{N}i\chi(i)-\frac{1}{N}\sum_{i=1}^{N}i^2\chi(i)\\&-\frac{1}{N}\sum_{i=1}^{N}\sum_{b=i}^{N}i\chi(i)\\
&=\frac{1}{N}\sum_{i=1}^{N}\frac{(N+i+2)(N-i+1)}{2}\chi(i)+\frac{1}{N}(1+N+\frac{1+N}{2})\sum_{i=1}^{N}i\chi(i)\\&-\frac{1}{N}\sum_{i=1}^{N}i^2\chi(i)-\frac{1}{N}\sum_{i=1}^{N}(N+1-i)i\chi(i)\\
&=-\frac{1}{2N}\sum_{i=1}^{N}i^2\chi(i)+\frac{1}{2}\sum_{i=1}^{N}i\chi(i)
\end{align*}
Again, the right hand side of the last equality equals $L(-1,\chi)$ by \eqref{L1}.
\end{proof}

The above proof doesn't give any implication why this theorem is true, and it relies on the well known formulas for special L values. Next, we will give a completely different sketch proof of theorem 1.1 which tells us the reason why those axioms work, and also indicates why in some sense, $L(-1,\chi)$ is the best we can do with these axioms. 
\begin{proof}[Sketch of Second Proof]
The Dirichlet series for $\chi$ is convergent when $\Re{s}>0$. The crucial trick is to 'enlarge' it's domain of convergence a bit further to the left by taking arithmetic average.\\
When $\Re(s)>0$, we have:
\begin{align*}
L(s,\chi)
&=\chi(1)1^{-s}+\chi(2)2^{-s}+\chi(3)3^{-s}+...\\
&=\lim \chi(1)1^{-s},\chi(1)1^{-s}+\chi(2)2^{-s},\chi(1)1^{-s}+\chi(2)2^{-s}+\chi(3)3^{-s},...\\
&=\lim_{l\rightarrow\infty}\frac{1}{l}\sum_{k=1}^{l}\frac{(l+1-k)\chi(k)}{k^s}
\end{align*}
We will prove that the limit on the right hand side exists for $\Re{s}>-1$. 

It's more or less obvious that we can take the subseries for $l=aN$ without affecting this limit. Then the limit becomes:
\begin{equation*}
\lim_{a\rightarrow\infty}\frac{1}{aN}\sum_{i=1}^{N}\sum_{k=0}^{a-1}\frac{(aN+1-kN-i)\chi(i)}{(kN+i)^s}
\end{equation*}
Using the Euler-Maclaurin formula, we may expand $\sum_{i=1}^{N}\sum_{k=0}^{a-1}\frac{(aN+1-kN-i)\chi(i)}{(kN+i)^s}$ as an asymptotic series as powers of $a$, and observe that it's enough to show that terms of power higher than $a^{-s}$ in this expansion vanishes. Namely, it's enough to check that the coefficients of $a^{2-s}$ and $a^{1-s}$ both vanish. The former is quite easy provided that we keep in mind the equality $\sum_{k=1}^{N}\chi(k)=0$. For the later, we write
\begin{equation*}
\sum_{i=1}^{N}\sum_{k=0}^{a-1}\frac{(aN+1-kN-i)\chi(i)}{(kN+i)^s}
=\sum_{i=1}^{N}\sum_{k=0}^{a-1}[\frac{(aN+1)\chi(i)}{(kN+i)^s}-\frac{\chi(i)}{(kN+i)^{s-1}}]
\end{equation*}
Expanding these two terms using the Euler-Maclaurin formula, the relevant terms which have possibly nontrivial contributions to the coefficient of $a^{1-s}$ are:
\begin{equation*}
\sum_{i=1}^{N}[\frac{aN+1}{N^s}\frac{1}{1-s}(a+\frac{i}{N}-1)^{1-s}\chi(i)-\frac{1}{N^{s-1}}\frac{1}{2-s}(a+\frac{i}{N}-1)^{2-s}\chi(i)]
\end{equation*}
which gives a total contribution to the coefficient of $a^{1-s}$ as:
\begin{equation*}
\sum_{i=1}^{N}[\frac{N}{N^s}\frac{1}{1-s}(1-s)(\frac{i}{N}-1)\chi(i)-\frac{1}{N^{s-1}}\frac{1}{2-s}(2-s)(\frac{i}{N}-1)\chi(i)]=0.
\end{equation*}

Note that the above argument already proved the first part, i.e. $L(0,\chi)$ part of theorem 1.1, since $0$ is in the domain $\Re(s)>-1$. To prove the part for $L(-1,\chi)$, one needs to show in addition that the arithmetic average of the series $\left\{\frac{1}{l}\sum_{k=1}^{l}\frac{(l+1-k)\chi(k)}{k^s}\right\}$, regarded as a real function of $s$, is right continuous at $s=-1$. (Since we have already shown it's well defined at $s=-1$ in the first proof.) We believe this can be done routinely, and by using some elementary properties of the Riemann zeta function. However, we will not honestly include this calculation here, since it's not illuminating for the rest of our discussion.
\end{proof}
\begin{remark}
If one wishes, one may calculate the coefficient of $a^{-s}$, and see that it's nonzero in general. That's why $L(-1,\chi)$ is the best we can do in some sense.
\end{remark}
As a digression, let's see how amusingly these axioms for divergent series are related with the explicit Dirichlet class number formula for quadratic imaginary fields $\mathbb{Q}(\sqrt{-q})$, where $q$ is an odd prime congruent to $3$ mod $4$.

Let $K$ denote the quadratic imaginary field $\mathbb{Q}(\sqrt{-q})$, $\zeta_K(s)$ the Dedekind zeta function for $K$, and $\zeta(s)$ the Riemann zeta function. Let $\chi$ now be the unique Dirichlet character given by the quadratic residue symbol of $q$. Then the quadratic reciprocity law implies the relation $\zeta(s)L(s,\chi)$=$\zeta_K(s)$. From this relation and the analytic class number formula, we have the well known (for the case $q$ being an odd prime congruent to $3$ mod $4$):
\begin{equation*}
L(1,\chi)=\frac{2\pi h(K)}{w\sqrt{q}}
\end{equation*}
where $h(K)$ denotes the class number of $K$, and $w$ the number of roots of unity in $K$, which is $2$ if $q>3$, and $6$ if $q=3$. For simplicity of illustration, we consider the case $q>3$ in the following. (and the case of $q=3$ is surely no more difficult, and the reader can do it himself or herself.)\\
Furthermore, the functional equation for Dirichlet L functions gives the following:
\begin{equation*}
L(0,\chi)=-i\tau(\chi)\pi^{-1}L(1,\chi)
\end{equation*}
where $\tau(\chi)$ is the Gauss sum attached to $\chi$, and in our case equals $i\sqrt{q}$.\\
Combining these two formulas and \eqref{L0}, we get:
\begin{equation*}
h(K)=L(0,\chi)=-\frac{1}{q}\sum_{k=1}^{q-1}k(\frac{k}{q})
\end{equation*} 
the explicit Dirichlet class number formula.

In other words, $h(K)$ equals $L(0,\chi)$, which can be calculated by either \eqref{L0}, or our axioms of divergent series. The first part of our second proof of theorem 1.1 can also be regarded as an independent proof of this explicit class number formula.

If we forget everything about Dirichlet L functions, it seems rather surprising that axiom (2) for divergent series is related with class number. For example, pick $q=7$, then axiom (2) for divergent series gives the following 'formula':
\begin{equation*}
\text{class number of} \hspace{.3cm} \mathbb{Q}(\sqrt{-7})=1-1-1+1-1-1+0+1-1-1+1-1-1+0+...=1
\end{equation*}
If $\chi$ is a trivial character of conductor $N$, we may use an extended version of axiom (1) combined with axiom (2) to calculate the values of $\sum_{i=1}^{\infty}\chi(i)$ and $\sum_{i=1}^{\infty}\chi(i)i$.\\
Axiom $(1)^{'}$: For any natural number $k$, 
\begin{equation*}
\lim b_1, b_2, b_3,...=\lim b_1, b_1,..., b_1, b_2, b_2,..., b_2, b_3, b_3,..., b_3,...
\end{equation*}
where each $b_i$ appears in the second series $k$ times.

Before we do any serious calculations with axioms (2) and $(1)^{'}$ (such as applying them together to calculate $0+1+2+3+...$, as we have shown heuristically), we first discuss the question of the consistency of these two axioms. They are not always consistent with each other: it's not hard to see that if one applies these two axioms in some different orders to the divergent series $0+1+1+1+...$, one can get different values:\\
Apply axiom (2) and linearity, one gets
\begin{align*}
0+1+1+1+...
&=\lim 0,1,2,...\\
&=\lim 0,\frac{1}{2},1,...
\end{align*}
So $0+1+1+1+...$ equals half itself, and so it has to be equal to $0$. But as we calculated before by applying axioms (1) and (2) and linearity with another order, it equals $-\frac{1}{2}$.

However, if we restrict the applicability of these axioms in the following way, they are indeed consistent with each other (we must point out that the following restriction is not natural. On the other hand, we will see indeed that there is a natural way based on these two axioms, to obtain results like $0+1+2+3+...=-\frac{1}{12}$. The explanation of this will be given at the end of section 2.):

(I) axiom (2) applies to $0+\sum_{i=1}^{\infty}\chi(i)$ or $0+\sum_{i=1}^{\infty}\chi(i)i$ if and only if $\chi$ satisfies the conditions in Remark 1.2. i.e. $\chi$: $\textbf{N}\rightarrow\mathbb{C}$ having a period $N$, and satisfying $\sum_{k=1}^{N}\chi(k)=0$. Furthermore, once axiom (2) is applied to any particular series, one has to apply only axiom (2) to whatever resulting series in consecutive steps until one gets the answer to this series.

(II) axiom $(1)^{'}$ applies to the series $0+1\cdot a_1+2\cdot a_2+3\cdot a_3+...$ if and only if the Dirichlet L series $\frac{a_1}{1^s}+\frac{a_2}{2^s}+\frac{a_3}{3^s}+...$ is convergent when $\Re(s)$ is large enough, and it has a meromorphic analytic continuation to the whole complex plane $L(s,\chi)$, and $L(s,\chi)$ is analytic at $s=-1$.
\begin{lemma}
Axiom $(1)^{'}$, axiom (2), and the axioms of linearity are consistent with each other provided that we put the above restrictions (I) and (II) on the applicability of them.
\end{lemma}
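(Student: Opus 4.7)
The plan is to show that under restrictions (I) and (II) every legal derivation of a numerical value for an admissible series produces the same answer, so no two derivations can conflict. The key observation is that axiom (2) is the only axiom producing numerical output: axiom $(1)'$ and linearity are purely formal rewriting rules, so every terminating derivation ends with an application of axiom (2), and consistency reduces to showing that the terminating $L$-value does not depend on the preceding rewrites.

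The first step is to handle derivations using only axiom (2) and linearity. By Theorem 1.1 together with Remark 1.2, any admissible series $0 + \sum \chi(i)$ or $0 + \sum \chi(i)i$ under restriction (I) is assigned the unique value $L(0,\chi)$ or $L(-1,\chi)$. The second clause of (I)---that once axiom (2) is invoked, only axiom (2) may be used thereafter---forbids reintroducing axiom $(1)'$ after numerical content has been produced, and rules out precisely the loop that led to the two conflicting answers exhibited above for $0+1+1+1+\ldots$.

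The second step is to handle derivations that invoke axiom $(1)'$ before the terminating axiom (2). Here I would track, for each formal series appearing in the derivation, its associated Dirichlet series $L(s,a) = \sum a_n / n^s$. Linearity corresponds to genuine identities of Dirichlet series on the half-plane of absolute convergence, and restriction (II) supplies analytic continuation of the starting $L$-function to $s=-1$. I would argue that the net effect of axiom $(1)'$ combined with linearity, when admissible, is to rewrite the original $0+\sum i\, a_i$ as a finite linear combination of series of the form required by (I), so that the final axiom (2) step produces precisely the value of the original Dirichlet series at $s=-1$. At the level of Dirichlet series, the rewriting corresponds to an identity of meromorphic functions that extends from the half-plane of absolute convergence to $s=-1$ by analytic continuation, and restriction (II) is exactly what guarantees regularity at the point of evaluation.

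The main obstacle is that axiom $(1)'$ does not in isolation preserve the value at $s=-1$ of the attached Dirichlet series: inserting $k-1$ zeros between successive terms multiplies $\zeta(s)$ by a factor with nontrivial value at $s=-1$, for instance. The resolution is that its admissible uses under restriction (II) only occur in combinations where the discrepancy cancels upon linearity. I would verify this template on the paradigm derivation $s_2 = -1/12$ in the text---where axiom $(1)'$ is used precisely to equate $s_2 - s_1$ with $4s_2$, corresponding to an identity among special $\zeta$-values---and indicate that the general case proceeds by the same analytic-continuation template.
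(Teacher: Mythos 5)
Your overall strategy---track the analytically continued $L$-value through a derivation and show that every legal derivation must output it---is in spirit the paper's, which packages the idea as follows: define a single-valued function $f$ on the space $V$ of series admissible for axiom $(1)'$ by $f(0+1\cdot a_1+2\cdot a_2+\cdots)=L(-1)$ where $L(s)=\sum_n a_n n^{-s}$, check that $f$ is linear and satisfies axiom $(1)'$ on $V$ and axiom (2) where (I) permits it, and conclude that any derivation, being a finite system of linear equations with a unique solution that $f$ also satisfies, must return the value of $f$. The genuine gap in your proposal is the treatment of axiom $(1)'$. You assert that it ``does not in isolation preserve the value at $s=-1$ of the attached Dirichlet series'' and propose that the discrepancy cancels only in combination with linearity, verified on the single example $s_2=-1/12$. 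This cannot close the argument: axiom $(1)'$ is applicable on its own whenever (II) holds, so if it genuinely changed the value of some admissible series one could immediately derive two distinct values for it and the lemma would be false; a consistency proof needs each axiom separately to preserve the invariant. In fact the claimed obstacle is absent, and seeing why is the crux. With the paper's conventions (the series carries a leading $0$, which is also repeated under inflation), the summand $na_n$ of the original series becomes the summand indexed by $kn$ of the inflated one, so the inflated series is $0+1\cdot a'_1+2\cdot a'_2+\cdots$ with $a'_{kn}=a_n/k$ and $a'_m=0$ otherwise; its Dirichlet series is $\sum_n (a_n/k)(kn)^{-s}=k^{-1-s}L(s)$, which at $s=-1$ equals $L(-1)$ exactly. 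The $n$-weighting in the definition of $V$ and the evaluation point $s=-1$ are matched precisely so that the factor $k^{-1-s}$ is trivial. Your instinct that positions matter is sound (placing $na_n$ at index $kn-k+1$ instead of $kn$ really does change the continued value, e.g.\ $\sum_n n(2n-1)^{-s}$ at $s=0$ gives $\tfrac{1}{24}$ rather than $-\tfrac{1}{12}$), but the resolution is this bookkeeping, not a cancellation to be hunted down derivation by derivation.

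A secondary structural problem: you model a derivation as a chain of formal rewrites terminating in a single application of axiom (2). Your own paradigm, the computation $s_2-s_1=4s_2$, is not of this form: it determines $s_2$ implicitly as the unique solution of a self-referential linear equation. The paper's formulation covers this by viewing any derivation as producing a finite subset $X$ of $V$ and a finite system of linear equations, satisfied by a single-valued function on $X$ and admitting a unique solution; since $f$ satisfies every equation of the system, that unique solution agrees with $f$. You would need to adopt something of this shape to handle derivations that fix a value implicitly rather than by direct terminal evaluation.
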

\begin{proof}
We denote by $V$ the complex vector space of divergent series $0+1\cdot a_1+2\cdot a_2+3\cdot a_3+...$ satisfying conditions for axiom $(1)^{'}$. Obviously the domain on which axiom (2) applies is a subspace of $V$. For any element in $V$ of the above form, We define the value of this series to be the corresponding special L value at $s=-1$. So this gives us a single valued function on $V$. Let's denote this function by $f$. (Note that this definition is the same as saying that we define the value of the divergent series $0+a_1+a_2+a_3+...$ to be the corresponding special L value at $s=0$. ) Obviously $f$ satisfies the linearity axiom. Furthermore it is straightforward to show that this definition satisfies axiom $(1)^{'}$. Moreover, the first proof of theorem 1.1 and remark 1.2 show that this definition also satisfies axiom (2) for the case when we can apply axiom (2).

So for any divergent series $0+1\cdot a_1+2\cdot a_2+3\cdot a_3+...$, if one gets a value to it by applying a finite sequence of these axioms and linearity, then this divergent series is an element in $V$, and one gets a finite subset $X$ of $V$, and a finite system of linear equations satisfied by a single valued function $f_1$ on $X$, and a unique solution. By what we have said above, we may switch $f_1$ with $f$, and get exactly the same system of linear equations. So $f$ and $f_1$ has to agree on $X$, and so on this chosen divergent series. So we can only possibly get one value to our chosen divergent series by using our axioms, and this value agrees with the one given by $f$. This implies that axioms (2) and $(1)^{'}$ and linearity are consistent with each other.
\end{proof}
On the other hand, there is a way to use axiom $(1)^{'}$ combined with axiom (2) and linearity to give values to  $\sum_{i=1}^{\infty}\chi(i)$ and $\sum_{i=1}^{\infty}\chi(i)i$ for $\chi$ a function satisfying conditions in remark 1.2, or a trivial Dirichlet character, or some other similar functions we don't discuss here (agree with $L(0,\chi)$ and $L(1,\chi)$, of course): the former case is covered by theorem 1.1, and for the later, one first check for the case when $N$ is prime, and then the proof for the general case requires just a little more effort. So, our axioms can be regarded as an algebraic way to get some results of analytic continuations for Dirichlet L function. (However, these axioms are not really purely algebraic, since the domain of applicability of axiom $(1)^{'}$ is not described algebraically. ) It is interesting to think of the question of finding algebraic axioms for divergent series corresponding to some other special L values, such as $L(-2,\chi)$, $L(-3,\chi)$, etc. Note that axiom $(1)^{'}$ still applies to these cases, but axiom (2) doesn't.

Inspired by Borcherds, we try to formulate some of the above results in terms of nonstandard analysis, and make a technical conjecture about analytic continuation of Dirichlet L series afterwards. One can consult any textbook on nonstandard analysis for basic terminologies.

Again, let $\left\{b_n\right\}$ be a sequence of complex numbers, and $\left\{c_n\right\}$ be the arithmetic average sequence of $\left\{b_n\right\}$. We choose a hyperreal number system and make the following definitions for the limit of $\left\{b_n\right\}$:
\begin{definition}
If $\left\{b_n\right\}$ is bounded, then we say $\lim b_n$ exists if and only if $[c_N]$ is the same real number for all infinite integers $N$ which are divisible by any finite integer, where $[c_N]$ denotes the standard part of the finite hyperreal number $c_N$. If $\lim b_n$ exists, we define its value to be $[c_N]$.
\end{definition}
\noindent Obviously if the sequence $\left\{b_n\right\}$ has a finite limit in the usual sense, our definition for $\lim b_n$ agrees with the usual one. Moreover we have the following:
\begin{lemma}
the above definition satisfies axiom $(1)^{'}$ for bounded sequences.
\end{lemma}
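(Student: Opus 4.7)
The plan is to compute directly the Cesaro average of the $k$-fold repeated sequence and compare it with that of the original. Let $b' = (b_1, b_1, \ldots, b_1, b_2, b_2, \ldots)$, where each $b_i$ appears $k$ times, and let $c_n$, $c'_n$ denote the Cesaro averages of $\{b_n\}$ and $\{b'_n\}$ respectively. The central calculation I would carry out first is that, for every positive integer $m$,
\begin{equation*}
c'_{mk} = \frac{1}{mk}\sum_{j=1}^{mk} b'_j = \frac{k(b_1 + b_2 + \cdots + b_m)}{mk} = c_m.
\end{equation*}
By transfer, the identity $c'_{mk} = c_m$ persists for hypernatural $m$, so in particular their standard parts coincide whenever both are finite hyperreals; the boundedness of $\{b_n\}$ (inherited by $\{b'_n\}$) ensures this finiteness.

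Next, I would exploit the divisibility restriction in Definition 1.5. Any infinite integer $N$ that is divisible by every finite integer must satisfy $k \mid N$, so I can write $N = mk$ with $m = N/k$. I would then check that $m$ is itself an infinite integer divisible by every finite integer: $m$ is infinite because $N$ is infinite and $k$ is finite; and for any finite $\ell$, the product $\ell k$ is also finite and divides $N$, so $\ell \mid m$. Conversely, if $m$ is an admissible infinite hyperinteger, then $N = mk$ is admissible by the same argument. Hence $N \mapsto N/k$ is a bijection from the set of admissible infinite hyperintegers onto itself, and under it the value $[c'_N]$ is mapped to $[c_m]$.

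Combining the two ingredients, the set $\{[c'_N] : N \text{ admissible}\}$ equals the set $\{[c_m] : m \text{ admissible}\}$. Therefore one of these sets is a singleton $\{L\}$ if and only if the other is, and with the same value $L$. Translated through Definition 1.5, this says that $\lim b'_n$ exists if and only if $\lim b_n$ does, and the two values agree; this is precisely axiom $(1)'$. I do not anticipate a real obstacle here: the only nontrivial point is the preservation of the condition ``divisible by every finite integer'' under multiplication and division by $k$, handled by the short divisibility argument above, and the boundedness hypothesis serves only to guarantee that the standard parts in Definition 1.5 are defined for both sequences.
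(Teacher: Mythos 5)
Your proposal is correct and follows essentially the same route as the paper's proof: the identity $c'_{Nk}=c_N$ for the inflated sequence together with the observation that $N\mapsto N/k$ is a bijection of the set of admissible infinite hyperintegers onto itself. You simply spell out the divisibility argument and the transfer step that the paper declares ``obvious.''
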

\begin{proof}
Choose any natural number $k$, obviously the map $N\rightarrow N/k$ from the set of infinite integers divisible by any finite integer to itself is one-to-one and onto. $c_N$ for the sequence $b_1, b_1,..., b_1, b_2, b_2,..., b_2, b_3, b_3,..., b_3,...$ (where each $b_i$ appears $k$ times) is the same as $c_{N/k}$ for the sequence $\left\{b_n\right\}$, for all infinite integers $N$ which are divisible by any finite integer. So the set of $c_N$ for the former inflated sequence is the same as the set of $c_N$ for the sequence $\left\{b_n\right\}$ (where $N$ runs through all infinite integers divisible by any finite integer), and so one has a limit if and only if the other has one, and the values of limits are the same if they exist.
\end{proof}
\noindent Furthermore, theorem 1.1 tells us that for functions $\chi$ satisfying conditions in remark 1.2, definition 1.5 gives a unique value to the divergent series $\sum_{i=1}^{\infty}\chi(i)$ which agrees with $L(-1,\chi)$. This inspires us to make the following conjecture on the analytic continuation of Dirichlet L series: let $a_1,a_2,a_3,...$ be a series of complex numbers such that its partial sum series $\left\{b_n\right\}$ is bounded.
\begin{conjecture}
If $\lim b_n$ exists as defined in definition 1.5, and its existence and value are independent of our choice of hyperreal number system, then the Dirichlet L series $\sum_{n=1}^{\infty}\frac{a_n}{n^s}$ can be analytically continuated to $s=0$ with the single value given by $\lim b_n$.
\end{conjecture}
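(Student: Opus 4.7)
The plan is to translate the nonstandard hypothesis into a standard statement about the Cesàro averages $c_n$, and then use a two-fold partial summation / Mellin approach to relate $L(s)$ to those averages near $s=0$.

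First I would unpack Definition~1.5 in standard language. By a saturation argument, the existence of $\lim b_n = C$ in the sense of Definition~1.5, together with its independence from the choice of hyperreal model, should be equivalent to the following standard statement: for every $\epsilon > 0$ there exist integers $M$ and $n_0$ such that $|c_n - C| < \epsilon$ for every natural number $n \geq n_0$ divisible by $M!$. In other words, $c_n \to C$ along the filter base of ``highly divisible'' integers, although $c_n$ need not converge in the usual sense. Boundedness of $\{b_n\}$ automatically gives boundedness of $\{c_n\}$.

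Second, for $\Re s > 0$ the series $\sum a_n n^{-s}$ converges by Dirichlet's test (partial sums bounded), and two applications of Abel summation yield
\begin{equation*}
L(s) \;=\; s(s+1)\int_{1}^{\infty} F(x)\, x^{-s-2}\, dx, \qquad F(x) := \int_{1}^{x} b_{\lfloor t\rfloor}\, dt,
\end{equation*}
so that $F(N) = (N-1)\,c_{N-1}$ at integer arguments, and heuristically $F(x) \sim Cx$. Writing $F(x) = Cx + R(x)$ and splitting the integral,
\begin{equation*}
L(s) \;=\; C(s+1) \;+\; s(s+1)\int_{1}^{\infty} R(x)\, x^{-s-2}\, dx,
\end{equation*}
which formally gives $L(0) = C = \lim b_n$, as desired.

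Third, the real work is to promote this formal identity into an honest analytic continuation: one must show that the remainder integral is holomorphic, or has at most a simple pole, at $s = 0$, so that the prefactor $s(s+1)$ kills any singularity. The hypothesis delivers $R(N)/N \to 0$ only along highly divisible $N$; to upgrade this into an integral estimate on $\int_1^X R(x)\, x^{-2}\, dx$ adequate for analytic continuation, I would combine a Tauberian interpolation (using boundedness of the $b_n$ to control how $R(x)$ can fluctuate between two consecutive highly divisible integers) with the independence-of-hyperreal hypothesis, which should prevent oscillations along any ultrafilter extending the divisibility filter. The main obstacle is precisely this last step: the divisibility-filter convergence is genuinely weaker than full Cesàro convergence of $c_n$, so extracting a pointwise or integral bound on $R$ strong enough for analyticity is subtle and plausibly requires an extra mild assumption (e.g. $a_n = O(1)$) or a sharper Tauberian input; this is presumably why the author leaves it as a conjecture.
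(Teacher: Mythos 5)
There is nothing in the paper to compare your proposal against: the statement is Conjecture 1.7, explicitly left unproved, and the author even writes that he lacks strong support for it and suspects only ``some variation of the conjecture'' is correct. Your proposal, as you yourself concede in the final step, is not a proof either. So the relevant question is whether your identification of the obstacle is the right one, and I think it is not.

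The step you flag as the main difficulty --- that convergence of $c_n$ along the divisibility filter is ``genuinely weaker'' than full Ces\`aro convergence --- is actually a non-issue. Since $\{b_n\}$ is bounded, say $|b_n|\le B$, one has $c_{n+1}-c_n=(b_{n+1}-c_n)/(n+1)$, hence $|c_{n+1}-c_n|\le 2B/(n+1)$; the total variation of $c_n$ between the consecutive multiples $kM!$ and $(k+1)M!$ is therefore $O(1/k)$, so convergence of $c_n$ to $C$ along multiples of a single $M!$ already forces $c_n\to C$ unrestrictedly. Your saturation unpacking of Definition 1.5 thus collapses to ordinary $(C,1)$-summability of $\sum a_n$ to $C$, and no Tauberian interpolation or appeal to independence of the hyperreal model is needed there. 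The genuine gap is in your second and third steps: $(C,1)$-summability gives only $F(x)=Cx+R(x)$ with $R(x)=o(x)$, and that is far too weak to make $\int_1^\infty R(x)x^{-s-2}\,dx$ holomorphic, or even meromorphic, at $s=0$. A remainder of size $R(x)\asymp x/\log x$ is compatible with all your hypotheses and would contribute a term behaving like $-s(s+1)\log s$ near $s=0$: the prefactor $s(s+1)$ kills it in the limit $s\to 0^{+}$ but does not remove the branch point. What your computation honestly establishes is the classical Abelian statement that $L(s)\to C$ as $s\to 0^{+}$ in a Stolz angle (Hardy--Riesz); the conjecture's stronger claim of analytic continuation \emph{to} $s=0$ would need a genuine Tauberian hypothesis on the $a_n$ beyond boundedness of the partial sums, and as literally stated is plausibly false --- which is consistent with the author's own hedging about it.
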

\begin{remark}
Lemma 1.6 is a support to the above conjecture, since as we discussed in the proof of lemma 1.4: the function $f$ in that proof, which roughly speaking is to define the value of certain divergent series by analytic continuation of Dirichlet L series, also satisfies axiom $(1)^{'}$. However, we don't have strong support for this conjecture, and we think it's quite possible that some variation of the conjecture is correct, if this conjecture is to make sense after all.
\end{remark}     
From theorem 1.1, we have understood why axiom (2) alone give values agree with special L values associated to some divergent series. Our next aim is to consider the divergent series $0+1+2+3+...$ in terms of the conformal field theory of free scalar fields, and from there we give a physical interpretation of axiom $(1)^{'}$ and why this axiom possibly leads to the answer $-1/12$. (This also gives us a reason to replace axiom (1) by the more powerful axiom $(1)^{'}$.)

In bosonic string theory, we use scalar fields $X^\mu$ to describe the embedding of the string world sheet into background spacetime. For introductory reference, see \cite{Polchinski}. These fields are described by a conformal field theory on the world sheet. Consider the 'holomorphic part' of the theory, we have the world sheet energy momentum tensor whose components give rise to the oscillator representation of the Virasoro algebra with central charge $c=1$:
\begin{equation}
L_n:=\frac{1}{2}\sum_{j\in\mathbb{Z}}:a_{-j}a_{j+n}:
\end{equation}
where $a_j$'s are operators representing the oscillator algebra. i.e. 
\begin{equation*}
[a_m,a_n]=m\delta_{m,-n}
\end{equation*}
Where 
\begin{equation*}
:a_ia_j:=
\begin{cases}
a_ia_j &\text{if $i\le j$,}\\
a_ja_i &\text{otherwise}
\end{cases}
\end{equation*}
is the creation-annihilation normal ordering.

Note that in particular,
\begin{align*}
L_0&=\frac{1}{2}\sum_{j\in\mathbb{Z}}:a_{-j}a_{j}:\\
&=\frac{1}{2}a_{0}^{2}+\frac{1}{2}\sum_{j=1}^{\infty}a_{-j}a_{j}+\frac{1}{2}\sum_{j=1}^{\infty}:a_{j}a_{-j}:
\end{align*}
If we calculate the classical energy momentum tensor by variation of the world sheet action against the world sheet metric, we will get the zero mode of the classical energy momentum tensor before quantization as:
\begin{equation}
L_{0}^{c}=\frac{1}{2}\sum_{j=-\infty}^{\infty}a_{-j}a_{j}
\end{equation}
Formally, if we use the commutation relations of the oscillator algebra to pass from the classical $L_0^c$ to quantum $L_0$, we get
\begin{align*}
L_{0}^{c}
&=\frac{1}{2}a_{0}^2+\sum_{j=1}^{\infty}a_{-j}a_{j}+\frac{1}{2}(0+1+2+3+...)\\
&=L_0+\frac{1}{2}(0+1+2+3+...)
\end{align*}
Classically, we have $L_{m}^{c}=0$ as the equation of motion of the world sheet metric. In particular, $L_{0}^{c}=0$. So formally, this leads us to the requirement:
\begin{equation}
(L_0+\frac{1}{2}(0+1+2+3+...))v=0
\end{equation}
for all physical states $v$.

This gives rise to a contribution of this oscillator representation of the Virasoro algebra to the vacuum energy formally as $\frac{1}{2}(0+1+2+3+...)$. On the other hand, self-consistency of the conformal field theory gives other ways to calculate the value of this vacuum energy, and giving the value to be $-\frac{1}{24}$. For reference, see for example\cite{Polchinski}, Page 54, and page 73. Mathematically, the outcome of this physics is that this is the value by which a shift of $L_0$ eliminates the linear term in the Virasoro algebra commutation relations among the $L_{m}$'s. So this piece of physics requires us to assign value $-\frac{1}{12}$ to the divergent series $0+1+2+3+...$. In the next section, we will see this generalizes to giving physical interpretations to $\sum_{i=0}^{\infty}\chi(i)i$=$L(-1,\chi)$ for $\chi$ as in theorem 1.1.

Now, we are in position to give a physical interpretation of our axiom $(1)^{'}$ for the divergent series $0+1+2+3+...$.

Being a free scalar field, the $X^{\mu}$ conformal field theory happens to have a scaling symmetry. This can be seen from the spacetime propagator of the field $X^{\mu}$ or, for example, the equation 2.7.11 on page 60 of \cite{Polchinski}. For convenience, we'll copy this equation here:
\begin{equation}
X^{\mu}(z,\overline{z})X^{\nu}(z',\overline{z}')=:X^{\mu}(z,\overline{z})X^{\nu}(z',\overline{z}'):-\frac{\alpha'}{2}\eta^{\mu\nu}\ln{|z-z'|^2}
\end{equation}
This equation appears slightly different than the one in the book, but they are actually the same. From which we can see, if we make the change $(z-z')\rightarrow(z-z')^n$, and divide by $2n$, the expression $-\frac{\alpha'}{2}\eta^{\mu\nu}\ln{|z-z'|^2}$ remains unchanged. This is the scaling symmetry we will talk about.

Now, for $L_0=\frac{1}{2}\sum_{j\in\mathbb{Z}}:a_{-j}a_{j}:$, we have defined the value of the divergent series $\frac{1}{2}(0+1+2+3+...)$ to be the vacuum Casimir energy. Under this scaling symmetry, we have the corresponding self-embedding of the oscillator algebra $\tau_{l}$: $a_k\rightarrow a_{lk}$, $1\rightarrow l$, for any positive integer $l$. This self-embedding extends to an endomorphism of the universal eveloping algebra of the oscillator algebra. So consequently the operators $\frac{1}{l}\tau_l(L_{m})$ satisfy the same Virasoro commutation relations as before, which is also predicted by the scaling symmetry of the physics. As the physics should remain unchanged, we should have the same vacuum Casimir energy as before. But this time, it is given by the divergent series:
\begin{equation*}
\frac{1}{2}(0+0+...+0+1+0+...+0+2+0+...+0+3+0+...+0+...)
\end{equation*} 
where we have $l-1$ many copies of zero between every consecutive pairs of numbers. So the value of this divergent series should be the same as $\frac{1}{2}(0+1+2+3+...)$, and this is exactly what axiom $(1)^{'}$ says about the series $\frac{1}{2}(0+1+2+3+...)$.

This completes the physical interpretation of axiom $(1)^{'}$ for the series $0+1+2+3+...$, and why it possibly leads to the answer given by special zeta value when combined with axiom (2). (because this should agree with the value given by the consistency of the physics.) Our next step here is to contemplate on some further implications of this. From the discussion before, we know that axiom $(1)^{'}$ combined with axiom (2) give values to $\sum_{i=1}^{\infty}\chi(i)i$ coincide with $L(-1,\chi)$, for any Dirichlet character. This lets us think of the possibility of trying to introduce Dirichlet characters in a modified version of the free scalar conformal field theory, so we can possibly get a physical interpretation of axiom $(1)^{'}$ also for the series $\sum_{i=0}^{\infty}\chi(i)i$, for any Dirichlet character $\chi$. So next, we will do this to achieve our goal here, and to go deeper later.

\section{Twisted Virasoro Operators}
Let $N$ be an odd positive integer, and $G$ be a finite abelian group of functions $\mathbb{Z}/N\mathbb{Z}\rightarrow\mathbb{C}$ satisfying:\\
For any $\chi\in G$,
\begin{equation}\label{2.1}
\chi(j)=\chi(-j)
\end{equation}
for any $j$, and for any $\chi \in G$ which is not the unit,
\begin{equation}
\sum_{k=1}^{N}\chi(k)=0
\end{equation}
Then we define operators 
\begin{equation}\label{2.3}
L_{n}^{\chi}:=\frac{1}{2N}\sum_{j\in\mathbb{Z}}\chi(j):a_{-j}a_{j+nN}:
\end{equation}
for $\chi\in G$, and $n\in \mathbb{Z}$.

First we will show that these operators are closed under the Lie bracket. For this purpose, we need two little lemmas:
\begin{lemma}
For $m=Nk$, $$\sum_{j=-1}^{-m}\chi(j)j=m(L(0,\chi)-\frac{1}{2}\sum_{k=1}^{N}\chi(k))-\frac{\sum_{k=1}^{N}\chi(k)}{2}m(\frac{m}{N}-1)$$
\end{lemma}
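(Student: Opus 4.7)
The plan is to reduce the sum to an integer-indexed sum by using the symmetry hypothesis $\chi(j)=\chi(-j)$ from \eqref{2.1}, then exploit the periodicity of $\chi$ modulo $N$ together with the explicit formula \eqref{L0} for $L(0,\chi)$ to rewrite everything in the claimed form.

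First I would substitute $i=-j$ in the sum. Since $\chi(-i)=\chi(i)$, this gives
$$\sum_{j=-1}^{-m}\chi(j)j=-\sum_{i=1}^{m}\chi(i)i.$$
Next, write $m=Nk$ and split the range $1\le i\le Nk$ into $k$ consecutive blocks of length $N$, writing $i=\ell N+r$ with $0\le\ell\le k-1$ and $1\le r\le N$. Using periodicity $\chi(\ell N+r)=\chi(r)$, I get
$$\sum_{i=1}^{Nk}\chi(i)i=\sum_{\ell=0}^{k-1}\sum_{r=1}^{N}\chi(r)(\ell N+r)=\frac{Nk(k-1)}{2}\sum_{r=1}^{N}\chi(r)+k\sum_{r=1}^{N}r\chi(r).$$

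Now I would feed in the formula \eqref{L0}, which rearranges to $\sum_{r=1}^{N}r\chi(r)=-N\,L(0,\chi)+\tfrac{N}{2}\sum_{r=1}^{N}\chi(r)$. Substituting this into the previous line and collecting the two $\sum_{r=1}^{N}\chi(r)$ contributions yields
$$\sum_{i=1}^{Nk}\chi(i)i=-kN\,L(0,\chi)+\frac{Nk^{2}}{2}\sum_{r=1}^{N}\chi(r).$$
Substituting back $kN=m$ and $Nk^{2}=m^{2}/N$, and multiplying by $-1$ to recover the original sum over negative indices, gives
$$\sum_{j=-1}^{-m}\chi(j)j=m\,L(0,\chi)-\frac{m^{2}}{2N}\sum_{r=1}^{N}\chi(r).$$

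The final step is purely algebraic: check that the right-hand side displayed in the lemma collapses to the same expression. Expanding the claimed formula,
$$m\left(L(0,\chi)-\tfrac12\textstyle\sum_k\chi(k)\right)-\tfrac{\sum_k\chi(k)}{2}\,m\!\left(\tfrac{m}{N}-1\right)=m\,L(0,\chi)-\tfrac{\sum_k\chi(k)}{2}\!\left[m+m\!\left(\tfrac{m}{N}-1\right)\right],$$
and the bracket simplifies to $m^{2}/N$, matching what I derived. So the identity holds. There is no serious obstacle here; the only point requiring care is making correct use of the evenness $\chi(j)=\chi(-j)$ at the outset (so that the sign from $j\mapsto-j$ comes only from the factor $j$) and remembering that \eqref{L0} is valid not just for Dirichlet characters but for the wider class of periodic functions described in Remark 1.2, which is exactly the class to which $\chi$ belongs here.
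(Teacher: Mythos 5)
Your proposal is correct and follows essentially the same route as the paper's own proof: reflect the sum to positive indices using the evenness hypothesis \eqref{2.1}, split the range $1\le i\le Nk$ into $k$ blocks of length $N$ using periodicity, and eliminate $\sum_{r=1}^{N}r\chi(r)$ via \eqref{L0}. The only cosmetic difference is that you collapse the answer to $mL(0,\chi)-\frac{m^2}{2N}\sum_{r}\chi(r)$ and then verify it agrees with the displayed form, whereas the paper keeps the two terms grouped as stated; the algebra is identical.
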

\begin{proof}
\begin{align*}
\sum_{j=-1}^{-m}\chi(j)j
&=-\sum_{j=1}^{m}\chi(j)j\\
&=(-1)\sum_{s=0}^{k-1}\sum_{j=sN+1}^{(s+1)N}\chi(j)j\\
&=kN(L(0,\chi)-\frac{1}{2}\sum_{k=1}^{N}\chi(k))-aN(\frac{1}{2})k(k-1)\\
&=m(L(0,\chi)-\frac{1}{2}\sum_{k=1}^{N}\chi(k))-\frac{\sum_{k=1}^{N}\chi(k)}{2}m(\frac{m}{N}-1)
\end{align*}
\end{proof}
\begin{lemma}
For $m=Nk$,
\begin{align*} \sum_{j=-1}^{-m}\chi(j)j^2&=\frac{1}{6}(m-N)m(\frac{2m}{N}-1)\sum_{k=1}^{N}\chi(k)-(m-N)m(L(0,\chi)-\frac{1}{2}\sum_{k=1}^{N}\chi(k))\\&-2m(L(-1,\chi)+\frac{N}{2}L(0,\chi)-\frac{N}{6}\sum_{k=1}^{N}\chi(k))
\end{align*}
\end{lemma}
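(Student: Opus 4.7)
The plan is to mirror the proof of Lemma 2.1 one power of $j$ higher. By the evenness condition \eqref{2.1}, $\sum_{j=-1}^{-m}\chi(j)j^2 = \sum_{j=1}^{m}\chi(j)j^2$. Write $m = Nk$ and partition $\{1,\dots,m\}$ into residue classes modulo $N$ via $j = sN + i$ with $0 \le s \le k-1$ and $1 \le i \le N$. Expanding $(sN+i)^2 = s^2N^2 + 2sNi + i^2$ and using the periodicity of $\chi$ to separate the $s$- and $i$-sums, one is left with three clean pieces involving the elementary evaluations $\sum_{s=0}^{k-1} 1 = k$, $\sum_{s=0}^{k-1} s = k(k-1)/2$ and $\sum_{s=0}^{k-1} s^2 = (k-1)k(2k-1)/6$.

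Re-expressing $k = m/N$, $k(k-1) = (m-N)m/N^2$, and $2k-1 = 2m/N - 1$, this rewrites the sum as
\begin{equation*}
\sum_{j=1}^{m}\chi(j)j^2 = \frac{(m-N)m(\tfrac{2m}{N}-1)}{6}\sum_{i=1}^{N}\chi(i) + \frac{(m-N)m}{N}\sum_{i=1}^{N}i\chi(i) + \frac{m}{N}\sum_{i=1}^{N}i^2\chi(i).
\end{equation*}
The first term already matches the corresponding term in the claimed identity, so the task reduces to re-expressing $\sum_{i=1}^{N} i\chi(i)$ and $\sum_{i=1}^{N} i^2\chi(i)$ in terms of $L(0,\chi)$ and $L(-1,\chi)$.

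This last step is purely algebraic: solving \eqref{L0} gives $\sum_{i=1}^{N} i\chi(i) = -N\bigl(L(0,\chi) - \tfrac{1}{2}\sum_{k=1}^{N}\chi(k)\bigr)$, which turns the middle piece into exactly the second term of the lemma. Substituting this into \eqref{L1} and solving for $\sum_{i=1}^{N} i^2\chi(i)$ yields $\sum_{i=1}^{N} i^2\chi(i) = -2N\bigl(L(-1,\chi) + \tfrac{N}{2}L(0,\chi) - \tfrac{N}{6}\sum_{k=1}^{N}\chi(k)\bigr)$, which matches the third term after multiplication by $m/N$. The whole argument is bookkeeping; the only point requiring care is that the lemma is stated without assuming $\sum_{k=1}^{N}\chi(k) = 0$, so the residual $\sum\chi$ terms from \eqref{L0} and \eqref{L1} must be tracked throughout and shown to reassemble correctly in each piece, rather than being discarded as in the applications of Lemma 2.1 that the authors have in mind.
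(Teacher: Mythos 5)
Your proposal is correct and follows essentially the same route as the paper's own proof: split $\sum_{j=1}^{m}\chi(j)j^2$ over residue classes mod $N$, expand $(sN+i)^2$, evaluate the power sums in $s$, and then invert \eqref{L0} and \eqref{L1} to replace $\sum i\chi(i)$ and $\sum i^2\chi(i)$ by the stated combinations of $L(0,\chi)$, $L(-1,\chi)$ and $\sum\chi$. The algebraic identities you quote for these two moment sums check out, so nothing further is needed.
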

\begin{proof}
\begin{align*}
\sum_{j=-1}^{-m}\chi(j)j^2
&=\sum_{r=0}^{k-1}\sum_{j=1}^{N}\chi(j)(rN+j)^2\\
&=\sum_{r=0}^{k-1}\sum_{j=1}^{N}[r^2N^2\chi(j)+2rNj\chi(j)+j^2\chi(j)]\\
&=\frac{1}{6}(k-1)k(2k-1)N^2a+(k-1)kN\sum_{j=1}^{N}\chi(j)j+k\sum_{j=1}^{N}\chi(j)j^2\\
&=\frac{1}{6}(k-1)k(2k-1)N^2a+(k-1)kN[-N(L(0,\chi)-\frac{1}{2}\sum_{k=1}^{N}\chi(k))]+\\&k(-2N)[L(-1,\chi)+\frac{N}{2}L(0,\chi)-\frac{N}{6}\sum_{k=1}^{N}\chi(k)]\\
&=\frac{1}{6}(m-N)m(\frac{2m}{N}-1)\sum_{k=1}^{N}\chi(k)-(m-N)m(L(0,\chi)-\frac{1}{2}\sum_{k=1}^{N}\chi(k))\\&-2m(L(-1,\chi)+\frac{N}{2}L(0,\chi)-\frac{N}{6}\sum_{k=1}^{N}\chi(k))
\end{align*}
\end{proof}
Having these two computational lemmas, let's calculate $\sum_{j=-1}^{-m}\chi(j)j(m+j)$:
\begin{align*}
\sum_{j=-1}^{-m}\chi(j)j(m+j)
&=m^2(L(0,\chi)-\frac{1}{2}\sum_{k=1}^{N}\chi(k))-\frac{\sum_{k=1}^{N}\chi(k)}{2}m^2(\frac{m}{N}-1)+\frac{1}{6}(m-N)m(\frac{2m}{N}-1)\sum_{k=1}^{N}\chi(k)\\&-(m-N)m(L(0,\chi)-\frac{1}{2}\sum_{k=1}^{N}\chi(k))-2m(L(-1,\chi)+\frac{N}{2}L(0,\chi)-\frac{N}{6}\sum_{k=1}^{N}\chi(k))\\
&=-2mL(-1,\chi)+(-\frac{m^3}{6N})\sum_{k=1}^{N}\chi(k)
\end{align*}
So we have
\begin{equation}\label{2.4}
-\frac{1}{2}\sum_{j=-1}^{-m}\chi(j)j(m+j)=mL(-1,\chi)+\frac{m^3}{12N}\sum_{k=1}^{N}\chi(k)
\end{equation}
Next, we calculate the commutators of the $L_n^{\chi}$'s routinely, and we proceed as in \cite{Kac}.
\begin{lemma}
\begin{equation}
[a_k,L_n^{\chi}]=\frac{1}{N}\chi(k)ka_{k+nN}
\end{equation}
\end{lemma}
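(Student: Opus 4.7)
The plan is to compute $[a_k, L_n^\chi]$ by a direct application of the oscillator commutation relation $[a_m, a_\ell] = m\,\delta_{m,-\ell}$, the Leibniz rule, and the two hypotheses on $\chi$. The first thing I would dispose of is the normal ordering: the difference $:a_{-j}a_{j+nN}: - a_{-j}a_{j+nN}$ is either $0$ (when $n \ne 0$, since then $a_{-j}$ and $a_{j+nN}$ commute) or a scalar $\pm j$ in the degenerate $n=0$ case; in either situation it is central, so $[a_k,\,:a_{-j}a_{j+nN}:] = [a_k, a_{-j}a_{j+nN}]$ and the normal-ordering symbols can simply be dropped inside the bracket.

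With that reduction, I would expand via Leibniz to obtain
\[
[a_k, a_{-j} a_{j+nN}] = [a_k, a_{-j}]\,a_{j+nN} + a_{-j}\,[a_k, a_{j+nN}] = k\,\delta_{k,j}\,a_{j+nN} + k\,\delta_{j,-k-nN}\,a_{-j}.
\]
On the support of the two Kronecker deltas, the surviving oscillator simplifies in both cases to $a_{k+nN}$. Multiplying by $\chi(j)/(2N)$ and summing over $j \in \mathbb{Z}$, only the two values $j=k$ and $j=-k-nN$ contribute, and I get
\[
[a_k, L_n^\chi] = \frac{k}{2N}\bigl(\chi(k) + \chi(-k-nN)\bigr)\,a_{k+nN}.
\]

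To finish, I would reduce $\chi(-k-nN)$ using the fact that $\chi$ is defined on $\mathbb{Z}/N\mathbb{Z}$, which gives $\chi(-k-nN) = \chi(-k)$, and then invoke the parity hypothesis \eqref{2.1} to conclude $\chi(-k) = \chi(k)$. The two terms then combine to $2\chi(k)$, producing exactly $[a_k, L_n^\chi] = \frac{1}{N}\chi(k)\,k\,a_{k+nN}$. There is no serious obstacle here; the only step worth double-checking is the centrality of the normal-ordering correction, which is routine once the two cases $n=0$ and $n\ne 0$ are separated as above.
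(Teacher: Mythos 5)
Your proof is correct and follows essentially the same route as the paper: drop the normal ordering inside the bracket (its effect is a central scalar), apply Leibniz and $[a_m,a_\ell]=m\delta_{m,-\ell}$ so only $j=k$ and $j=-k-nN$ survive, and combine the two terms using periodicity and \eqref{2.1}. The only difference is cosmetic: the paper regularizes the infinite sum with a cutoff $L_n^{\chi}(\epsilon)$ and lets $\epsilon\to 0$, whereas you manipulate the sum directly, which is harmless here since only finitely many terms contribute.
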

\begin{proof}
Define the function $\psi$ on $\mathbb{R}$ by:
\begin{equation*}
\psi(x)=
\begin{cases}
1 &\text{if $|x|\leq 1$,}\\ 0 &\text{if $|x|>1$}
\end{cases}
\end{equation*}
Put
\begin{equation}
L_n^{\chi}(\epsilon)=\frac{1}{2N}\sum_{j\in\mathbb{Z}}\chi(j):a_{-j}a_{j+nN}:\psi(\epsilon j)
\end{equation}
Note that $L_n^{\chi}(\epsilon)$ contains only a finite number of terms if $\epsilon\neq 0$ and that $L_n^{\chi}(\epsilon)\rightarrow L_n^{\chi}$ as $\epsilon\rightarrow 0$. More precisely, the latter statement means that, given any $v$ in the Fock space, $L_n^{\chi}(\epsilon)(v)=L_n^{\chi}(v)$ for $\epsilon$ sufficiently small.

$L_n^{\chi}(\epsilon)$ differs from the same expression without normal ordering by a finite sum of scalars. This drops out of the commutator $[a_k,L_n^{\chi}(\epsilon)]$ and so
\begin{align*}
[a_k,L_n^{\chi}(\epsilon)]
&=\frac{1}{2N}\sum_{j\in\mathbb{Z}}[a_k,\chi(j)a_{-j}a_{j+nN}]\psi(\epsilon j)\\
&=\frac{1}{2N}\sum_{j\in\mathbb{Z}}[a_k,\chi(j)a_{-j}]a_{j+nN}\psi(\epsilon j)+\frac{1}{2N}\sum_{j\in\mathbb{Z}}a_{-j}[a_k,\chi(j)a_{j+nN}]\psi(\epsilon j)\\ 
&=\frac{1}{2N}\chi(k)ka_{k+nN}\psi(\epsilon k)+\frac{1}{2N}\chi(-nN-k)ka_{k+nN}\psi(\epsilon (k+nN))
\end{align*}
Since $\chi(-nN-k)=\chi(k)$, the $\epsilon\rightarrow 0$ limit gives the result of the lemma.
\end{proof}
Next we calculate the commutator $[L_m^{\chi_1},L_n^{\chi_2}]$, and the result is the following theorem:
\begin{theorem}
\begin{equation}
[L_m^{\chi_1},L_n^{\chi_2}]=(m-n)L_{m+n}^{\chi_1\chi_2}+\delta_{m,-n}[\frac{m}{N}L(-1,\chi_1\chi_2)+\frac{m^3}{12}\sum_{k=1}^{N}(\chi_1\chi_2)(k)]
\end{equation}
\end{theorem}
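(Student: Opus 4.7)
The plan is to follow the standard Kac-style derivation of the Virasoro bracket, carrying the Dirichlet characters through. First I would reintroduce the regularized operator $L_m^{\chi_1}(\epsilon)$ from the proof of the previous lemma so that every intermediate sum is finite, compute $[L_m^{\chi_1}(\epsilon),L_n^{\chi_2}]$, and only at the very end send $\epsilon\to 0$. Because $:a_{-j}a_{j+mN}:$ differs from $a_{-j}a_{j+mN}$ only by a scalar that drops out of the commutator, the Leibniz rule together with Lemma 2.3 yields
\begin{equation*}
[:a_{-j}a_{j+mN}:,\,L_n^{\chi_2}]
=\frac{\chi_2(j)}{N}\bigl\{(j+mN)\,a_{-j}a_{j+(m+n)N}-j\,a_{-j+nN}a_{j+mN}\bigr\},
\end{equation*}
where I have used $\chi_2(-j)=\chi_2(j)$ and periodicity mod $N$.

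Next I would put each oscillator product back into normal-ordered form. The normal-ordered parts of the two summands can be combined, after the index shift $j\mapsto j-nN$ in the second one (legitimate under the $\epsilon$ regulator because $\chi_1\chi_2$ has period $N$), into a single sum whose coefficient of $:a_{-j}a_{j+(m+n)N}:$ is $(\chi_1\chi_2)(j)\cdot(m-n)N$; this produces the regular contribution $(m-n)L_{m+n}^{\chi_1\chi_2}$. The scalar corrections $a_pa_q\mapsto\,:a_pa_q:$ vanish unless $p+q=0$, so the remaining anomaly is supported on $m+n=0$. A short case analysis on the step functions $[j<0]$ and $[j+mN<0]$ shows that these corrections are nonzero exactly on the range $-mN\le j\le -1$, a finite range on which the regulator disappears harmlessly. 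Assembling everything with the overall prefactor $\frac{1}{2N}\cdot\frac{1}{N}=\frac{1}{2N^2}$, the anomaly takes the shape
\begin{equation*}
\frac{1}{N^2}\cdot\Bigl(-\tfrac{1}{2}\sum_{j=-1}^{-mN}(\chi_1\chi_2)(j)\,j(mN+j)\Bigr)\delta_{m,-n}.
\end{equation*}

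At this point \eqref{2.4}, applied with $m$ replaced by $mN$, evaluates the inner sum to $mN\,L(-1,\chi_1\chi_2)+\tfrac{m^3N^2}{12}\sum_{k=1}^{N}(\chi_1\chi_2)(k)$, and dividing by $N^2$ reproduces exactly the central term $\frac{m}{N}L(-1,\chi_1\chi_2)+\frac{m^3}{12}\sum_{k=1}^{N}(\chi_1\chi_2)(k)$ claimed in the theorem. The main obstacle is the bookkeeping around the normal-ordering swaps: one must justify the index shift modulo finite corrections under the regulator, pin down the precise anomaly range $-mN\le j\le -1$, and track the prefactors correctly, in particular the $1/N^2$ that rescales the output of \eqref{2.4}. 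Once that is done, \eqref{2.4} supplies all the remaining arithmetic and the central term falls out in closed form.
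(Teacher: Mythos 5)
Your proposal is correct and follows essentially the same route as the paper: regularize with $\psi(\epsilon j)$, apply the commutator lemma, shift indices to assemble $(m-n)L_{m+n}^{\chi_1\chi_2}$, locate the normal-ordering anomaly on $-mN\le j\le -1$ when $m+n=0$, and evaluate it via \eqref{2.4} with $m$ replaced by $mN$. The bookkeeping you flag (the $1/N^2$ prefactor and the cancellation of corrections outside that finite range) is exactly what the paper's computation carries out.
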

\begin{proof}
For notational simplicity, we denote $\chi_1\chi_2$ by $\omega$. We have
\begin{align*}
[L_m^{\chi_1}(\epsilon),L_n^{\chi_2}]
&=\frac{1}{2N}\sum_{j\in\mathbb{Z}}[\chi_1(j)a_{-j}a_{j+mN},L_n^{\chi_2}]\psi(\epsilon j)\\
&=\frac{1}{2N}\sum_{j\in\mathbb{Z}}\chi_1(j)[\frac{1}{N}\chi_2(-j)(-j)a_{-j+nN}a_{j+mN}+\frac{1}{N}\chi_2(j+mN)(j+mN)a_{-j}a_{j+mN+nN}]\psi(\epsilon j)\\
&=\frac{1}{2N^2}\sum_{j\in\mathbb{Z}}\omega(j)[(-j)a_{-j+nN}a_{j+mN}\psi(\epsilon j)+(j+mN)a_{-j}a_{j+(m+n)N}\psi(\epsilon j)]
\end{align*}
We split the first sum into terms satisfying $j\geq \frac{(n-m)N}{2}$ which are in normal order and reverse the order of terms for which $j<\frac{(n-m)N}{2}$ using the commutation relations. In the same way we split the second sum into terms satisfying $j\geq -\frac{(n+m)N}{2}$ and $j< -\frac{(n+m)N}{2}$ . Then
\begin{align*}
[L_m^{\chi_1}(\epsilon),L_n^{\chi_2}]&
=\frac{1}{2N^2}\sum_{j\in\mathbb{Z}}\omega(j)[(-j):a_{-j+nN}a_{j+mN}:\psi(\epsilon j)+(j+mN):a_{-j}a_{j+(m+n)N}:\psi(\epsilon j)]\\&-\frac{1}{2N^2}\sum_{j=-1}^{-mN}(j+mN)j\omega(j)
\end{align*}
Making the transformation $j\rightarrow j+nN$ in the first sum and taking the limit $\epsilon \rightarrow 0$, and using \eqref{2.4}, we get the desired result.
\end{proof}
So in particular, we get a Lie algebra from 'twisting' the operators $L_m$ by a finite abelian group of functions $G$. Now we are in position to give the promised physical interpretation of $\sum_{i=0}^{\infty}\chi(i)i=L(-1,\chi)$: just as before, we can formally commute the $a_k$'s in the modes of 'classical energy momentum tensor' to put them into normal order to get the operators $L_m^{\chi}$. As a result, the 'vacuum Casimir energy' will come out naively as the divergent series $\frac{1}{2N}\sum_{i=0}^{\infty}\chi(i)i$. On the other hand, as was mentioned before, physical reasoning restricts the value of this vacuum energy to be the amount by which a shift of the zero mode of the energy momentum tensor has the effect of canceling the linear term in the commutation relations. From the above lemma, we see that the shift (and only this one) $L_0^{\chi}\rightarrow L_0^{\chi}+\frac{1}{2N}L(-1,\chi)$ does this job. We denote $L_0^{\chi}+\frac{1}{2N}L(-1,\chi)$ by $L(')_0^{\chi}$. Then what we have said is
\begin{equation} \label{2.8}
[L_m^{\chi_1},L_{-m}^{\chi_2}]=2mL(')_{0}^{\chi_1\chi_2}+\frac{m^3}{12}\sum_{k=1}^{N}(\chi_1\chi_2)(k)
\end{equation}
Upon canceling the common factor $\frac{1}{2N}$, we see that $\sum_{i=0}^{\infty}\chi(i)i=L(-1,\chi)$ comes out by comparing the values of this same vacuum energy.

As we promised in the end of section 1, let's also see how one gets a similar physical interpretation of axiom $(1)^{'}$ for the series $\sum_{i=0}^{\infty}\chi(i)i$, where $\chi$ is a function satisfying all the requirements stated at the beginning of this section: (Note that this essentially includes the case when $\chi$ is any Dirichlet character, since then $\sum_{i=0}^{\infty}\chi(i)i$ is possibly nonzero only when $\chi$ satisfies all the requirements stated at the beginning of this section.)\\
Having the twisted Virasoro operators set up, the interpretation is exactly the same as for the series $0+1+2+3+...$. Namely, the self-embedding $\tau_{l}$ of the oscillator algebra: $a_k\rightarrow a_{lk}$, $1\rightarrow l$ again tells us that the operators $\frac{1}{l}\tau_l(L^{\chi}_{m})$ satisfy the same Virasoro commutation relations as for the operators $L^{\chi}_{m}$. In other words, the scaling symmetry of the $X^{\mu}$ conformal field theory we talked about generalizes to our twisted Virasoro operators, and this symmetry is again the physical reason for us to impose axiom $(1)^{'}$ to the divergent series $\sum_{i=0}^{\infty}\chi(i)i$, just as before. 

So we have achieved the desired goal as discussed in the end of section 1 by contemplating possible mathematics and physics implications of theorem 1.1. Next we'll talk a bit more about divergent series, namely, how to get $0+1+2+3+...=-\frac{1}{12}$ naturally from our axioms, as we mentioned in section 1. Again, let $\chi$ denote a function satisfying all the requirements stated at the beginning of this section. 

Axiom (2) gives us a linear functional $g$ on a subspace $U$ of the space of series as: we define $U$ to be the space consisting of series for which a finite consecutive sequence of applications of axiom (2) gives a convergent sequence and hence the value of the original series, and the value of $g$ on any series in $U$ is just this value. Obviously $U$ contains all convergent series, for which $g$ gives the usual limit. Furthermore, theorem 1.1 and previous discussions show that when $\chi$ is nontrivial, $\sum_{i=0}^{\infty}\chi(i)i$ is also in $U$, and the value of such series given by $g$ satisfies axiom $(1)^{'}$: in other words, any inflation of $\sum_{i=0}^{\infty}\chi(i)i$ according to the description of axiom $(1)^{'}$ is also in $U$, and $g$ maps it to the same complex number as for $\sum_{i=0}^{\infty}\chi(i)i$ itself. This fact matches nicely with the above physical interpretation of axiom $(1)^{'}$ for such series. Now, it's easy to see that the series $0+1+2+3+...$ is not in $U$. We claim that there exists an extension of the linear functional $g$ to a larger subspace $U'$ containing the series $0+1+2+3+...$, such that the value of $g$ on $0+1+2+3+...$ satisfies axiom $(1)^{'}$ in the above sense. Furthermore, for any such extension, we necessarily have $0+1+2+3+...=-\frac{1}{12}$. The reason is simple: as we have seen in \eqref{**}, for any such extension we must have $0+1+2+3+...=-\frac{1}{12}$. On the other hand, since we know $0+1+2+3+...$ and all its inflations in the sense of axiom $(1)^{'}$ is not in $U$, and furthermore it's obvious that all its inflations are linearly independent, then we may just define a $U'$ by hand to be the smallest subspace of the space of series containing $U$ and all these inflations, and define $g$ on all these inflations to be $-\frac{1}{12}$.

Let's state the above concisely: axiom (2) gives rise to $g$ and $U$, and there exist extensions of $g$ such that it can give value to $0+1+2+3+...$ satisfying axiom $(1)^{'}$, furthermore any such extension necessarily gives the value to be $-\frac{1}{12}$. Note that we have a physical reason for imposing axiom $(1)^{'}$ to $0+1+2+3+...$, so in this sense one gets $0+1+2+3+...=-\frac{1}{12}$ naturally from axiom (2) alone! In other words, if we agree that axiom (2) should hold for divergent series in one's regularization scheme, then we necessarily have $0+1+2+3+...=-\frac{1}{12}$ in the $X^{\mu}$ conformal field theory. Moreoever, the reader can now see that it's a very straightforward generalization to get for general $\chi$, $\sum_{i=0}^{\infty}\chi(i)i=L(-1,\chi)$ and $\sum_{i=0}^{\infty}\chi(i)=L(0,\chi)$ naturally from axioms (2) and $(1)^{'}$, just as the above.                    

\section{Fractional Powers}
Let's denote the Fock space representation of the infinite dimensional Lie algebra generated by the operators $L_0^{\chi}$ (together with the central element) as $Vir^{G}$. By abuse of notation, the same notation $Vir^{G}$ sometimes also mean the Lie algebra itself when there shouldn't be any confusion. We assume $G$ satisfies all conditions mentioned at the beginning of section 2. In this section, we will first analyze the structure of $Vir^{G}$, and it turns out it is as simple as one may possibly expect: it's just a direct sum of several copies of Virasoro algebras sharing the same central element. However, we will show that this Lie algebra interestingly relates some peculiar infinite products with linear combinations of special L values. This Lie algebra relates also to minimal model representations of the Virasoro algebra with negative central charges.
\begin{theorem}
$Vir^{G}$ is isomorphic to a direct sum of $|G|$ copies of Virasoro algebras sharing the same central element.
\end{theorem}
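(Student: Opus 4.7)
The plan is to diagonalize the $G$-graded bracket of the previous theorem by performing a Fourier transform in the finite abelian group $G$. Let $\widehat G$ denote the Pontryagin dual, so $|\widehat G|=|G|$, and for each $\psi\in\widehat G$ introduce the operators
$$\mathcal{L}_m^\psi \;:=\; \frac{1}{|G|}\sum_{\chi\in G}\overline{\psi(\chi)}\,L_m^\chi,\qquad m\in\mathbb{Z}.$$
By Fourier inversion the $\mathcal{L}_m^\psi$ and the $L_m^\chi$ span the same subspace of operators, so proving the theorem reduces to checking that the brackets $[\mathcal{L}_m^\psi,\mathcal{L}_n^{\psi'}]$ vanish for $\psi\neq\psi'$ and give standard Virasoro relations with a common central charge for $\psi=\psi'$.

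The first step treats the principal term of the bracket. Inserting the commutator formula from the preceding theorem and setting $\omega=\chi_1\chi_2$ in the resulting double sum, the Fourier weights recombine via the orthogonality relation $\sum_{\chi_1}\overline{\psi(\chi_1)}\psi'(\chi_1)=|G|\,\delta_{\psi,\psi'}$. A short bookkeeping then yields $(m-n)\,\delta_{\psi,\psi'}\,\mathcal{L}_{m+n}^\psi$, which already shows that distinct $\psi$-sectors mutually commute modulo central terms.

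The second step handles the central piece $\delta_{m,-n}\bigl[\tfrac{m}{N}L(-1,\chi_1\chi_2)+\tfrac{m^3}{12}\sum_{k}(\chi_1\chi_2)(k)\bigr]$. The same orthogonality again produces a factor $\delta_{\psi,\psi'}$, and the hypothesis $\sum_k\omega(k)=0$ for nontrivial $\omega\in G$ forces only $\omega=1$ to contribute to the $m^3$-coefficient, which therefore equals $N/(12|G|)$ and is \emph{independent} of $\psi$. The residual linear-in-$m$ piece $\tfrac{m}{N|G|}\sum_{\omega}\overline{\psi(\omega)}L(-1,\omega)$ is absorbed by a $\psi$-dependent shift of $\mathcal{L}_0^\psi$, exactly parallel to the passage $L_0^\chi\to L(')_0^\chi$ used in section 2. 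After this shift the bracket $[\mathcal{L}_m^\psi,\mathcal{L}_n^\psi]$ takes the standard Virasoro form $(m-n)\mathcal{L}_{m+n}^\psi+\delta_{m,-n}\tfrac{m^3-m}{12}c\,\mathbf{1}$ with central charge $c=N/|G|$ for every $\psi$.

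The main obstacle I expect is careful bookkeeping rather than any deep conceptual difficulty: the decisive point is verifying that the $m^3$-coefficient comes out \emph{independent} of $\psi$, which is precisely what allows one to identify the $|G|$ nominally distinct central elements with a single common one (in the Fock realization all of them act as $N/|G|$ times the identity, so the centers of the summands are naturally identified). Combined with the vanishing of cross-brackets established in the first step, the assignment $(\psi,m)\mapsto\mathcal{L}_m^\psi$ then furnishes the desired isomorphism of $Vir^{G}$ with a direct sum of $|G|$ Virasoro algebras sharing a single central element.
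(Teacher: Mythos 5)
Your proposal is essentially the paper's own proof: the paper defines $T_n^i=\frac{1}{k}\sum_{s=1}^k\omega^{is}L_n^s$ (choosing a generator and a primitive root of unity, first for cyclic $G$ and then for products of cyclic factors), which is exactly your Fourier transform over $\widehat{G}$, and it verifies the same character-orthogonality computation for both the principal term and the central term, with the zero mode shifted to $L(')_0^{\chi}$ just as you describe. The only slip is your value of the $m^3$-coefficient: it is $\frac{1}{12|G|}\sum_{k=1}^{N}\mathbf{1}_G(k)$ rather than $N/(12|G|)$, because the unit of $G$ (e.g.\ a principal Dirichlet character) need not be the constant function $1$; this does not affect the argument, since the coefficient is still independent of $\psi$, which is the decisive point.
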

\begin{proof}
We denote $|G|$=$k$. First we do the case when $G$ is cyclic, and then come to the general case.

For the case when $G$ is cyclic, for notational simplicity, let's denote by $1$ as a generator of $G$. Let $\omega$ be a primitive $k$th root of unity. We define operators 
\begin{align}
T_n^{i}=\frac{1}{k}\sum_{s=1}^{k}\omega^{is}L_n^{s}
\end{align}
if $n\neq 0$. Together with
\begin{equation}\label{3.2}
T_0^{i}=\frac{1}{k}\sum_{s=1}^{k}\omega^{is}L(')_0^{s}
\end{equation}
Denote
\begin{equation}
b=\sum_{s=1}^{N}\texttt{id}_G(s)
\end{equation}
We will see that for each $i$, the operators $T_n^{i}$ satisfy
\begin{equation} \label{3.4}
[T_m^{i},T_n^{i}]=(m-n)T_{m+n}^{i}+\delta_{m,-n}\frac{m^3}{12k}b
\end{equation}
and
\begin{equation} \label{3.5}
[T_m^{i},T_n^{j}]=0
\end{equation}
for any $m,n$, if $i\neq j$. So for the Fock space representation of each copy of the Virasoro algebra, the central charge equals $\frac{b}{k}$.

This is because if we look at the coefficient of any $L_{m+n}^{h}$ in $[T_m^{i},T_n^{j}]$, it equals $(m-n)$ times $\frac{1}{k^2}\sum_{s+t\equiv h(\mod k)}\omega^{si+tj}$, which equals $\frac{1}{k}\omega^{h}$ if $i=j$, and $0$ if $i\neq 0$, by elementary property of primitive roots of unity. The coefficient for the central element comes out in the same way if we also keep in mind \eqref{2.8}.

So the theorem is proved when $G$ is cyclic. Now for the general case, since $G$ is finite abelian, we have $G\cong\prod_{i=1}^{d}\mathbb{Z}/m_i\mathbb{Z}$. Let $\chi_i$ be a generator of $\mathbb{Z}/m_i\mathbb{Z}$, and $\omega_i$ be a primitive $m_i$'th root of unity. Then it's easy to see that the direct generalization of the operators $T_n^{i}$:
$$
\frac{1}{k}\sum L_n^{\chi_1^{s_1}\chi_2^{s_2}...\chi_d^{s_d}}\omega_1^{i_1s_1}\omega_2^{i_2s_2}...\omega_b^{i_bs_b}
$$
together with the obvious shift concerning the zero'th mode, give the desired decomposition of $Vir^G$ as a direct sum of $k$ copies of Virasoro algebras sharing the same central element.
\end{proof}
\begin{remark}
It is easy to see that the choice of the operators $T_n^i$ is unique if one wants \eqref{3.4} and \eqref{3.5} both to be satisfied.
\end{remark}
Now we digress to discuss modular transformation properties of some infinite products. Perhaps one of the most well known examples of this type is the Dedekind $\eta$ function
\begin{equation}\label{3.6}
\eta(\tau)=x^{\frac{1}{24}}\prod_{n=1}^{\infty}(1-x^n)
\end{equation}
where
\begin{equation*}
x=e^{2\pi i\tau}
\end{equation*}
$\eta(\tau)$ has famous modular transformation properties which is important in many areas of mathematics, and probably one should be curious about the appearance of the special fractional power $\frac{1}{24}$: why this and only this special power makes the function have the desired modular transformation properties? There are explanations of this, for example, this power can be calculated using the theory of theta functions. However, there is a physical interpretation of this in terms of conformal field theory which is conceptually straightforward:

If we calculate the 'one loop partition function' of the free scalar conformal field theory, the Dedekind $\eta$ function shows up as the main building block, mainly because it is the character (or, some call 'q-trace') of the oscillator representation of the Virasoro algebra. the power $\frac{1}{24}$ shows up exactly because this is the amount of vacuum Casimir energy. For details, one can see \cite{Polchinski}, chapter 7. The one loop partition function is automatically invariant under the modular action of $\operatorname{SL}(2,\mathbb{Z})$ because it should automatically inherit whatever symmetry of the lattice of the relevant elliptic curve. So the Dedekind $\eta$ function should have the desired modular transformation properties derived from the modular invariance property of the one loop partition function. So in this way, one gets a more or less straightforward physical understanding of why the Dedekind $\eta$ function has the desired modular transformation properties. Indeed, only the fractional power $\frac{1}{24}$ can make this miracle happen because $\frac{1}{24}=-\frac{1}{2}\zeta(-1)$ is minus the amount of the vacuum Casimir energy, which is fixed to exactly this value as we said before at the end of section 1. In other words, the fractional power $\frac{1}{24}$ in the Dedekind $\eta$ function can be understood as coming from the vacuum Casimir energy of some conformal field theory, which in turn is given by a special zeta value. For readers interested in mathematical formulation and proof of the above physical intuition, one can see the fundamental paper by Zhu: \cite{Zhu}. Our purpose here is, on the other hand, to explore some implications of this physical intuition to our setting, namely, $Vir^G$. We already see in section 2 that more general divergent series like $\sum_{k=1}^{\infty}\chi(k)k$ can be associated with vacuum Casimir energies of $Vir^G$, which are given by special L values. So a natural question is if some more general fractional powers which appear in some other infinite products having some peculiar modular transformation properties can physically be explained by some sort of vacuum energy, and mathematically given by (linear combinations of ) special L values? The purpose of the rest of section 3 is to answer this question affirmatively in an exact sense. For us, we think this answer does provide a valuable understanding of these fractional powers. (Indeed, a similar physical interpretation is already available if one considers some negative central charge minimal model representations of the Virasoro algebra. However, we don't have explicit constructions of these representations. Furthermore the new relation with special L values is much easier to see in our settings. Lastly our discussions include more general cases. )

In Rogers-Ramanujan identities one considers the curious infinite products
\begin{equation*}
\prod_{n=0}^{\infty}(1-x^{5n+1})(1-x^{5n+4})
\end{equation*}
and
\begin{equation*}
\prod_{n=0}^{\infty}(1-x^{5n+2})(1-x^{5n+3})
\end{equation*}
Furthermore, these two infinite products give the essential part of the characters of minimal model representations of the Virasoro algebra with central charge $c=c_{2,5}=-\frac{22}{5}$. For further details, see equations (3.1),(3.2) of  \cite{Hida}. In particular, Zhu's fundamental theorem applied to this case gives us corollary 1 of theorem 5.2 in \cite{Hida}. For convenience we copy it here:

The complex vector space spanned by the (modified) characters
\begin{equation*}
\overline{ch}_{-\frac{22}{5},0}(q)=q^{\frac{11}{60}}\prod_{n\geq 0}\frac{1}{(1-q^{5n+2})(1-q^{5n+3})}
\end{equation*}
and
\begin{equation*}
\overline{ch}_{-\frac{22}{5},-\frac{1}{5}}(q)=q^{-\frac{1}{60}}\prod_{n\geq 0}\frac{1}{(1-q^{5n+1})(1-q^{5n+4})}
\end{equation*}
is modular invariant.

More generally, for every integer $k\geq 2$, there are exactly $k$ inequivalent minimal model representations of the Virasoro algebra, with central charge 
\begin{equation}
c_{2,2k+1}=1-\frac{6(2k-1)^2}{4k+2}
\end{equation}
, and highest weight 
\begin{equation}
h_{2,2k+1}^{1,i}=\frac{(2(k-i)+1)^2-(2k-1)^2}{8(2k+1)}
\end{equation}
, where $i=1,2,...,k$.\\
The (modified) characters are
\begin{equation}\label{3.9}
\overline{ch}_{c_{2,2k+1},h_{2,2k+1}^{1,i}}(q)=q^{h_{2,2k+1}^{1,i}-\frac{c_{2,2k+1}}{24}}\prod_{n\neq \pm i,0 \mod{2k+1}}\frac{1}{(1-q^n)}
\end{equation}
Furthermore the vector space spanned by these characters is modular invariant.

Next, we will show how to express these infinite products in terms of theta functions and prove the above modular invariance property by the theory of theta functions( this is more or less a routine exercise of the theory of theta functions, but we include it here for later convenience).Our notations for theta functions are according to \cite{theta}.\\
First we have the definition of theta function with characteristic $[\frac{\epsilon}{\epsilon'}]\in\mathbb{R}^2$
\begin{equation}\label{theta}
\theta[\frac{\epsilon}{\epsilon'}](z,\tau)=\sum_{n\in\mathbb{Z}}\exp2\pi i\left\{\frac{1}{2}(n+\frac{\epsilon}{2})^2\tau+(n+\frac{\epsilon}{2})(z+\frac{\epsilon'}{2})\right\}
\end{equation}
which converges uniformly and absolutely on compact subsets of $\mathbb{C}\times\textbf{H}^2$.\\
Next we have Euler's identity
\begin{equation}\label{3.11}
\prod_{n=1}^{\infty}(1-x^n)=\sum_{n=-\infty}^{\infty}(-1)^nx^{\frac{n(3n+1)}{2}}
\end{equation}
As is well known, from the above two equations we can easily express the Dedekind $\eta$ function in terms of theta function
\begin{equation}\label{3.12}
\eta(\tau)=e^{-\frac{\pi i}{6}}\theta[\frac{\frac{1}{3}}{1}](0,3\tau)
\end{equation}
Moreover, we have the Jacobi triple product identity
\begin{equation}\label{3.13}
\prod_{n=1}^{\infty}(1-x^{2n})(1+x^{2n-1}z)(1+\frac{x^{2n-1}}{z})=\sum_{n=-\infty}^{\infty}x^{n^2}z^n
\end{equation}
for all $z$ and $x$ in $\mathbb{C}$ with $z\neq 0$ and $|x|<1$.

In the above equation, substitute $x$ by $x^{\frac{2k+1}{2}}$, and $z$ by $-x^\frac{2k+1-2j}{2}$, we get
\begin{equation}\label{3.14}
\prod_{n=1}^{\infty}(1-x^{(2k+1)n})(1-x^{(2k+1)n-j})(1-x^{(2k+1)n-(n-j)})=\sum_{n=-\infty}^{\infty}x^{\frac{2k+1}{2}n^2+\frac{2k+1-2j}{2}n}(-1)^n
\end{equation}
In \eqref{theta}, take $\epsilon=\frac{2(k-j)+1}{2k+1}$, and $\epsilon'=1$, we get
\begin{equation}\label{3.15}
\theta[\frac{\frac{2(k-j)+1}{2k+1}}{1}](0,(2k+1)\tau)=\sum_{n\in\mathbb{Z}}x^{\frac{2k+1}{2}n^2+\frac{2(k-j)+1}{2}n}x^{\frac{(2(k-j)+1)^2}{8(2k+1)}}(-1)^ne^{\frac{2(k-j)+1}{2(2k+1)}\pi i}
\end{equation}
We combine \eqref{3.14} and \eqref{3.15} to express the left hand side of equation \eqref{3.14} in terms of theta functions, and then divide the result by \eqref{3.12} and use \eqref{3.6}, we get
\begin{equation}\label{3.16}
\prod_{s\neq \pm j,0 \mod{2k+1},n\geq 1}\frac{1}{(1-x^{(2k+1)n-s})}=x^{\frac{1}{24}-\frac{(2(k-j)+1)^2}{8(2k+1)}}e^{\frac{\pi i}{6}-\frac{\pi i(2(k-j)+1)}{2(2k+1)}}\frac{\theta[\frac{\frac{2(k-j)+1}{2k+1}}{1}](0,(2k+1)\tau)}{\theta[\frac{\frac{1}{3}}{1}](0,3\tau)}
\end{equation}
Note that the power of $x$ that shows up in the above equation exactly equals to $-(h_{2,2k+1}^{1,j}-\frac{c_{2,2k+1}}{24})$. So combining \eqref{3.16} and \eqref{3.9}, we see that the complex vector space spanned by the (modified) characters $\overline{ch}_{c_{2,2k+1},h_{2,2k+1}^{1,j}}$ is the complex vector space spanned by the functions
\begin{equation*}
\frac{\theta[\frac{\frac{2(k-j)+1}{2k+1}}{1}](0,(2k+1)\tau)}{\theta[\frac{\frac{1}{3}}{1}](0,3\tau)}
\end{equation*}
where $j$ runs from $1$ to $k$. A direct application of Lemma 4.2 on page 216 of \cite{theta} gives the theta function theory proof that this vector space is modular invariant. 

Now we come back to our $Vir^G$. For simplicity, Let's first assume that $N=2k+1$ is an odd prime, and let $G$ be the group of Dirichlet characters of conductor $N$ which maps $-1$ to $1$. So $G$ is cyclic of order $k$. Let $\chi$ be a generator of $G$. We first calculate the vacuum Casimir energies associated with the $T_0^i$'s. Let's denote this quantity by $c_i$. Namely, we express $T_0^i$ as a linear combination of the unshifted operators $L_0^i$'s and a constant. The amount of the vacuum Casimir energy associated with $T_0^i$ is just this constant term. Recall \eqref{3.2} and \eqref{L1}, we have
\begin{equation}
c_i=\frac{1}{2k}\sum_{s=1}^{k}\omega^{is}L(-1,\chi^s)
\end{equation}
since $G$ satisfies the assumptions at the beginning of section 2, and for the trivial character $\chi^k$, we have
\begin{equation}
\sum_{k=1}^{N}\chi(k)=N-1
\end{equation} 
then an easy calculation shows
\begin{equation}
c_i=\frac{1}{2k}(-\frac{N}{12}(N-1))+\frac{1}{2k}(k\frac{-j^2-(N-j)^2}{2N}+k\frac{j+N-j}{2})
\end{equation}
Where $j,N-j$ is the unique pair such that 
\begin{equation}\label{3.20}
\chi(j)=\omega^{k-i}
\end{equation}
In other words,
\begin{equation}\label{3.21}
c_i=\frac{2k+1}{12}-\frac{j(N-j)}{2N}
\end{equation}
Since $Vir^G$ contains $k$ copies of the Virasoro algebra, we have $k$ concepts of vacuums defined by\\
$T_0^i$ vacuum:
\begin{equation}
T_0^i=0
\end{equation}
$i=1,2,...,k$.
These vacuums are 'orthogonal' in the sense of \eqref{3.5}, and each vacuum contains an infinite number of degenerate states in the Fock space. Suppose now we pick any $i$ and consider the $T_0^i$ vacuum. This is a subspace of the Fock space on which the operators $L_m^k-T_m^i$ act, since from \eqref{2.8} and \eqref{3.4} we can easily verify
\begin{equation}\label{3.23}
[L_m^k-T_m^i, T_0^i]=0
\end{equation}
for any $m\in\mathbb{Z}$.

Now we calculate the 'q-trace' of $L_0^k-T_0^i$ on the $T_0^i$ vacuum. More precisely, we should calculate this quantity for the operator corresponding to $L_0^k-T_0^i$ before we do any shift by vacuum energies, just as what one does for the usual free scalar conformal field theory. In other words, we need to include the effect of the vacuum energy associated to the operator $L_0^k-T_0^i$. Here we explain some intuitive reasons for doing this:

It is easy to see that $T_0^i$ contains only oscillator modes that are congruent to $j$ or $N-j$ $\mod N$, where $j$ satisfies \eqref{3.20}, and the $T_0^i$ vacuum consists exactly of states in the Fock space without these oscillator modes. From the physical point of view, the vacuum energy is the sum of zero point energies of all relevant oscillator modes. So whenever we want to calculate some sort of vacuum energy in the $T_0^i$ vacuum, we should erase the effect of these oscillator modes.

The vacuum energy associated to $L_0^k-T_0^i$ is
\begin{equation}
d_i=\frac{1}{2}L(-1,\chi^k)-c_i
\end{equation}
From \eqref{L1} one can easily obtain
\begin{equation}
L(-1,\chi^k)=\frac{N-1}{12}
\end{equation}
So from the above equation and \eqref{3.21}, we obtain
\begin{equation}\label{3.26}
d_i=\frac{(2(k-j)+1)^2}{8(2k+1)}-\frac{1}{24}
\end{equation}
which is easily checked to be equal to $h_{2,2k+1}^{1,j}-\frac{c_{2,2k+1}}{24}$.

Furthermore, recall that $T_0^i$ contains only oscillator modes that are congruent to $j$ or $N-j$ $\mod N$, and $L_0^k$ misses only oscillator modes that are divisible by $N$, an easy calculation shows that
\begin{equation}
\text{the (shifted) 'q-trace' of $L_0^k-T_0^i$ on the $T_0^i$ vacuum}=\overline{ch}_{c_{2,2k+1},h_{2,2k+1}^{1,j}}
\end{equation}
So as we said before, the powers of $x$ given by \eqref{3.26} or $h_{2,2k+1}^{1,i}-\frac{c_{2,2k+1}}{24}$ which are crucial for the infinite products in \eqref{3.9} to have special modular transformation properties, are explained by this vacuum Casimir energy, and consequently are expressed explicitly as linear combinations of special L values as
\begin{equation}\label{3.28}
\frac{(2(k-j)+1)^2}{8(2k+1)}-\frac{1}{24}=h_{2,2k+1}^{1,i}-\frac{c_{2,2k+1}}{24}=\frac{1}{2}L(-1,\chi^k)-\frac{1}{2k}\sum_{s=1}^{k}\omega^{is}L(-1,\chi^s)
\end{equation}
Remember that the above calculation is based on the assumption that $N$ is an odd prime. In general, for $N$ not necessarily prime, instead of Dirichlet characters, we may use the group $G$ of functions $\textbf{N}\rightarrow\mathbb{C}$ of period $N$ defined by
\begin{equation}
f_s(u)=\theta^{su}
\end{equation}
for $u=1,2,...,k$, $s=1,2,...,k$, and
\begin{equation}
f_s(N-u)=f_s(u)
\end{equation}
and
\begin{equation}
f_s(N)=0
\end{equation}
where $\theta$ is a primitive $k$th root of unity.

It's straightforward to see that $G$ satisfies all the assumptions made at the beginning of section 2, and all the above calculations work out without change. (At this stage it is crucial that formulas such as \eqref{L1} work for this kind of more general functions just the same as for Dirichlet characters.)

Note that for the usual oscillator representation of the Virasoro algebra, the corresponding 'q-trace' gives the Dedekind $\eta$ function which gives rise to modular forms for the full modular group $\operatorname{SL}(2,\mathbb{Z})$, which is what to be expected from a physical point of view since the one loop partition function should inherit all the symmetry of the lattice defining an elliptic curve (over $\mathbb{C})$. If our construction of $Vir^G$ is to be an analogue of a usual conformal field theory in a deeper sense, then the 'q-traces' should have similar modular properties, which is what we have shown to be true by using the theory of theta functions. However, it is obvious that a single 'q-trace' like the 'q-trace' of $L_0^k-T_0^i$ on the $T_0^i$ vacuum, which is equal to $\overline{ch}_{c_{2,2k+1},h_{2,2k+1}^{1,j}}$, possesses nice modular transformation properties only for a certain subgroup of finite index of the modular group. Only a collection of these, namely, the complex vector space spanned by all these similar 'q-traces', are invariant under the full modular group. To be more precise, it's not hard to see (yet not very straightforward) that up to a constant, the 'q-trace' of $L_0^k-T_0^i$ is invariant under the action of any element of the principal congruence subgroup $\Gamma(2k+1)$. (To see this, we may apply equation (2.16) on page 81 of \cite{theta}. There are two crucial facts needed to verify this: the equivalent class of the characteristic $[\frac{\frac{2l+1}{2k+1}}{1}]$ is invariant under the action of $\Gamma(2k+1)$, and $\Gamma(2k+1)$ is contained in the Hecke subgroup $\Gamma_0(2k+1)$.) On the other hand, it is well known that the modular curve $X(2k+1)$ given by $\Gamma(2k+1)$ is the moduli space for elliptic curves with a given basis for the $2k+1$ torsion. So it seems reasonable for one to suspect that at one loop level, twisting by our cyclic group $G$ reflects on the geometric side as the additional information of a given basis for the $2k+1$ torsion. This is interesting but we don't yet know how to make this precise.

At the end of this section, we elaborate a bit on our calculations relating $Vir^G$ with those fractional powers, and we ask a question. From \eqref{3.11} through \eqref{3.15}, it is straightforward to express $(1-x^{(2k+1)n-j})(1-x^{(2k+1)n-(n-j)})$ as a constant times a power of $x$ times a quotient of theta functions. The power that shows up is equal to
\begin{equation}\label{3.32}
-\frac{(N-2j)^2}{8N}+\frac{N}{24}
\end{equation}
which equals $-c_i$ by \eqref{3.21}.

On the other hand, similar to \eqref{3.23}, we have also
\begin{equation}\label{3.33}
[L_0^k-T_0^i,T_n^i]=0
\end{equation}
for all $n\in\mathbb{Z}$. So we can calculate the character of the representation of the Virasoro algebra given by the operators $T_n^i$ on the subspace $L_0^k-T_0^i =0$. Taking into account the vacuum energy associated with $T_0^i$, it's easy to see that this character (or 'q-trace') is given by
\begin{equation}\label{3.34}
x^{\frac{(N-2j)^2}{8N}-\frac{N}{24}}(1-x^{(2k+1)n-j})(1-x^{(2k+1)n-(n-j)})
\end{equation}
Comparing \eqref{3.32} and \eqref{3.34}, we see the relation between $Vir^G$ an these fractional powers also works in this other direction where we consider also the 'q-trace' of $T_0^i$'s on the subspace $L_0^k-T_0^i =0$, which now is indeed a character of a representation of the Virasoro algebra.

Although the theory of theta functions can be used to prove some of the 'q-traces' of our $Vir^G$ having peculiar modular transformation properties, we don't think this kind of proof is satisfactory. Here we want to pose the question of whether one can extend Zhu's fundamental result in this direction \cite{Zhu} to give a uniform 'conceptual' proof of the modular properties of these 'q-traces'. Despite the simplicity of $Vir^G$, one already sees that it connects several things. We hope that investigation of this question will have implications for what $Vir^G$ really means, and is it indeed an interesting tool connecting conformal field theory with number theory.

\section{Discussion}
In this section, we discuss some possibilities and problems trying to generalize our story of $Vir^G$. Also we try to indicate some connections between class field theory and our $Vir^G$.

First of all, note that there is good reason to impose on $\chi$ the condition given by \eqref{2.1}: since for the case when $\chi$ is a Dirichlet character, we have either \eqref{2.1}, or $\chi(-1)=-1$. For the latter case, it's easy to see that if we still make the same definition for $L_n^{\chi}$ as in \eqref{2.3}, then $L_n^{\chi}=0$ for all $n\neq 0$. It's well known that in this case, $L(-1,\chi)=0$ (this can easily be seen from the functional equation of $L(s,\chi)$, and that $L(s,\chi)$ is always analytic at $s=2$). So the theory in this case is trivial.

Next we make an observation connecting field theory with our $Vir^G$:

Let $K$ be a totally real finite abelian extension of $\mathbb{Q}$. Then we have
\begin{theorem}
$K$ gives rise to a unique $Vir^G$ in a canonical way.
\end{theorem}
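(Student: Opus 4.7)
The plan is to associate to $K$ a canonical character group $G_K$ via class field theory, and then to form $Vir^{G_K}$ via the construction of Section~2. By the Kronecker--Weber theorem, since $K$ is a finite abelian extension of $\mathbb{Q}$, there is a unique smallest positive integer $N$ (the conductor of $K$) with $K\subseteq\mathbb{Q}(\zeta_N)$. The canonical isomorphism $\operatorname{Gal}(\mathbb{Q}(\zeta_N)/\mathbb{Q})\cong(\mathbb{Z}/N\mathbb{Z})^{*}$ identifies $K$ with a subgroup $H=\operatorname{Gal}(\mathbb{Q}(\zeta_N)/K)\leq(\mathbb{Z}/N\mathbb{Z})^{*}$. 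I would define $G_K$ to be the group of Dirichlet characters of $(\mathbb{Z}/N\mathbb{Z})^{*}$ that are trivial on $H$, extended by zero to all residues mod $N$; this is a finite abelian group of order $[K:\mathbb{Q}]$, and it is uniquely determined by $K$.

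Next I would verify that $G_K$ satisfies the two defining conditions at the beginning of Section~2. The relation $\chi(j)=\chi(-j)$ for all $\chi\in G_K$ is equivalent to $\chi(-1)=1$, which under the reciprocity isomorphism translates to $-1\in H$, which in turn is equivalent to complex conjugation fixing $K$ pointwise, i.e.\ to $K$ being totally real --- precisely our hypothesis. The vanishing $\sum_{k=1}^{N}\chi(k)=0$ for nontrivial $\chi\in G_K$ follows from standard orthogonality of characters on $(\mathbb{Z}/N\mathbb{Z})^{*}$, together with the fact that $\chi$ vanishes on integers not coprime to $N$. With both axioms verified, the construction of Section~2 applied to $G_K$ yields $Vir^{G_K}$.

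Uniqueness then follows in two steps. First, $G_K$ is canonically attached to $K$ by class field theory, so there is no ambiguity in the character group itself. Second, the recipe $G\mapsto Vir^{G}$ is deterministic: the Fock space, the oscillator algebra, and the formulas \eqref{2.3} producing $L_n^{\chi}$ are completely determined by $G$, so two different paths from $K$ to $Vir^{G_K}$ must give the same answer.

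The main obstacle I anticipate is the standing assumption at the beginning of Section~2 that $N$ be odd: this can fail when $2$ ramifies in $K$ (e.g.\ $K=\mathbb{Q}(\sqrt{2})$ has conductor $8$). I would address this by inspecting the proofs of Lemmas~2.1--2.3 and Theorem~2.5 and checking that they use only the two axioms on $G$, so the construction in fact extends verbatim to even $N$; alternatively one can restrict the statement to $K$ of odd conductor. A secondary small point is to check that replacing $N$ by a proper multiple --- inflating each character to a larger common modulus --- produces an isomorphic twisted Virasoro algebra, so that the output genuinely depends only on $K$ and not on the choice of ambient cyclotomic field. This follows because inflation preserves both axioms and, crucially, the structural formula of Theorem~2.5 is invariant under this operation, since $L(-1,\chi)$ and $\sum_{k=1}^{N}\chi(k)$ depend only on the primitive character underlying $\chi$.
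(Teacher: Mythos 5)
Your proposal is correct and follows essentially the same route as the paper: the canonical correspondence between subfields of cyclotomic fields and groups of Dirichlet characters, with the total reality of $K$ translating into the evenness condition $\chi(-1)=1$ required at the start of Section~2. You are in fact more careful than the paper's own two-line proof, since you also verify the vanishing-sum condition, check independence of the ambient cyclotomic field, and flag the genuine wrinkle that the standing assumption ``$N$ odd'' can fail (e.g.\ conductor $8$), which the paper silently ignores.
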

\begin{proof}
It's well known that there is a one to one correspondence between subfields of cyclotomic fields and groups of Dirichlet characters, and regarding $K$ as subfields of different cyclotomic fields give rise to the same group of Dirichlet characters. Furthermore, $K$ being totally real implies that the corresponding group of Dirichlet characters is even, i.e., it satisfies \eqref{2.1}, so the theorem follows.
\end{proof}
Next let us point out some obstacles stopping us from straightforwardly generalizing our construction of $Vir^G$ to some other settings.

First of all, the Sugawara construction and the GKO construction are more or less direct generalizations of the oscillator representation of the Virasoro algebra, so it may be very natural to try to generalize the construction of $Vir^G$ to those representations. However, as we tried, direct generalization does not work. What stops us is exactly the nonzero Dual Coxeter number $g$ of nonabelian (finite dimensional) Lie algebras. However, there is good physical reason explaining why this doesn't work: the GKO construction corresponds to the conformal field theory of some currents associated with a (nonabelian) Lie group (or Lie algebra) symmetry, which have conformal dimensions $1$. So we no longer have the scaling symmetry as that for the $X^{\mu}$ fields. If the direct generalization of $Vir^G$ were to work as before, then we ought to have values of certain divergent series given by certain special L values as before. So conversely our axiom $(1)^{'}$ would better be put into work again. But axiom $(1)^{'}$ is a mathematical reflection of the scaling symmetry of the physics, which is lost.

Secondly, let's consider the seemingly more delicate possibility of generalizing the construction of $Vir^G$ to some number fields other than just $\mathbb{Q}$.

There is an obvious reason in trying to do this: we have successfully constructed $Vir^G$ for Dirichlet characters and consequently obtained some relations between special L values of Dirichlet characters and vacuum energies, and those fractional powers. Since Dirichlet characters are equivalent to Hecke characters of finite order for the field $\mathbb{Q}$, so it seems natural to think if it is possible to generalize the same construction to some more general number fields, replacing Dirichlet characters by Hecke characters on the ideles, and Dirichlet L functions by Hecke L functions, hoping to have similar stories going on. However, we are not succeeded in doing this at least in a direct way. From a very practical point of view, we think one of the major problems is that for number fields other than $\mathbb{Q}$, the rings of integers are too big for us to do any meaningful normal ordering. From a more 'conceptual' point of view, our story being worked for $\mathbb{Q}$ crucially relies on the fact that we know what a conformal field theory is for the genus zero case: it's described by vertex operator algebra theory. In particular, the conformal vector gives us a copy of the Virasoro algebra, which is where we put our hands on. So it seems reasonable to suspect that our difficulty here is probably tied to the difficulty of defining conformal field theory on higher genus Riemann surfaces. Last but not least, as we tried, not surprisingly, technical difficulties include the problem of dealing with global units and nonzero class number of the number field (and possibly archimedean places). We end our discussion for this type of questions here with a sort of 'no-go' lemma concerning the Virasoro algebra.

Again let $K$ be a number field, and $\textsl{O}_K$ be its ring of integers. We fix an arbitrary embedding of $K$ into $\mathbb{C}$, and regard $K$ as a subfield of $\mathbb{C}$. (We will see later that this choice is inessential) Suppose for each $n\in \textsl{O}_K$ we have an operator denoted by $L_n$. In the complex linear span of these operators together with a central element $c$, we suppose we have the following commutation relations:
\begin{equation}\label{4.4}
[L_m,L_n]=(m-n)L_{m+n}+\alpha(m,n)c
\end{equation}
for any $m,n\in \textsl{O}_K$. Where $\alpha(m,n)$ is a function from $\textsl{O}_K\times\textsl{O}_K$ to $\mathbb{C}$.
Then we have
\begin{lemma}  
The above defines a Lie algebra if and only if $\alpha(m,n)$ satisfies
\begin {equation}\label{4.5}
\alpha(m,n)=\delta_{m,-n}\alpha(m)
\end{equation}
\begin{equation}\label{4.6}
\alpha(-m)=-\alpha(m)
\end{equation}
\begin{equation}\label{4.7}
\alpha(m)=am+bm^3
\end{equation}
for some $a,b\in \mathbb{C}$.
\end{lemma}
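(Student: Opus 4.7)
The backward direction is a direct verification: when $\alpha(m,n)=\delta_{m,-n}(am+bm^3)$ with $am+bm^3$ odd in $m$, the bracket is antisymmetric, and the Jacobi identity reduces to a polynomial identity in $l,m$ with integer coefficients, valid over any commutative ring, in particular over $\textsl{O}_K$.

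For the forward direction, the plan is to first extract the cocycle condition. Expanding $[L_l,[L_m,L_n]]+\mathrm{cyc.}=0$ and using that $c$ is central, the coefficient of $L_{l+m+n}$ vanishes automatically (this is Jacobi for the Witt bracket $(m-n)L_{m+n}$, which holds over any commutative ring), leaving
\[
(m-n)\alpha(l,m+n)+(n-l)\alpha(m,n+l)+(l-m)\alpha(n,l+m)=0
\]
for all $l,m,n\in\textsl{O}_K$, together with antisymmetry $\alpha(m,n)=-\alpha(n,m)$. I would then specialize $n=0$ and simplify using antisymmetry to obtain $(l+m)\alpha(l,m)=(l-m)\beta(l+m)$, where $\beta(k):=\alpha(k,0)$. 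Thus on the off-diagonal $l+m\neq 0$ the cocycle is exact: $\alpha(l,m)=(l-m)\phi(l+m)$ with $\phi(k):=\beta(k)/k$. This coboundary can be absorbed by the shift $L_n\mapsto L_n+\phi(n)c$ for $n\neq 0$, which does not change the isomorphism class of the central extension and kills the off-diagonal part. Writing $\alpha(m):=\alpha(m,-m)$, antisymmetry then gives $\alpha(-m)=-\alpha(m)$, establishing \eqref{4.5} and \eqref{4.6}.

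It remains to pin down $\alpha(m)$. Substituting $n=-l-m$ into the cocycle condition and using oddness produces the functional equation
\[
(l-m)\alpha(l+m)=(l+2m)\alpha(l)-(2l+m)\alpha(m)
\]
for all $l,m\in\textsl{O}_K$. Direct computation shows $\alpha(m)=am+bm^3$ solves this. To show these are the only solutions, I would choose $a,b\in\mathbb{C}$ so that $\alpha(1)=a+b$ and $\alpha(2)=2a+8b$, set $\tilde\alpha(m):=\alpha(m)-am-bm^3$, and exploit that $\tilde\alpha$ satisfies the same functional equation with $\tilde\alpha(1)=\tilde\alpha(2)=0$. Taking $m=1$ yields the recurrence $\tilde\alpha(l+1)=\tfrac{l+2}{l-1}\tilde\alpha(l)$ for $l\neq 1$; iterating once gives $\tilde\alpha(l+2)=\tfrac{(l+2)(l+3)}{l(l-1)}\tilde\alpha(l)$. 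Taking $m=2$ directly gives $\tilde\alpha(l+2)=\tfrac{l+4}{l-2}\tilde\alpha(l)$ for $l\neq 2$. Equating the two expressions and clearing denominators produces $[(l+4)l(l-1)-(l-2)(l+3)(l+2)]\,\tilde\alpha(l)=0$; the bracketed quantity simplifies to the nonzero constant $12$, forcing $\tilde\alpha(l)=0$ for every $l\in\textsl{O}_K$ whose image in $\mathbb{C}$ avoids $\{0,1,2\}$. The three excluded values are integers and vanish by oddness and by the choice of $a,b$, so $\tilde\alpha\equiv 0$.

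The main obstacle is the last step. Over $\mathbb{Z}$ a single recurrence already propagates two initial data across all integers, but over a larger ring such as $\mathbb{Z}[\sqrt{2}]$ the integer recurrence cannot reach elements like $\sqrt{2}$. The device of comparing two independent recurrences (from $m=1$ and $m=2$) is what closes the argument: their discrepancy is a nonzero constant annihilator of $\tilde\alpha(l)$, independent of $l$, which rigidifies the central extension over the full ring $\textsl{O}_K$ and supports the paper's subsequent "no-go" intuition for direct generalizations of the Virasoro construction to number fields beyond $\mathbb{Q}$.
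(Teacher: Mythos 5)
Your proposal is correct, and its second half takes a genuinely different route from the paper's. The first half coincides in substance with the paper: there the derivation of \eqref{4.5}, \eqref{4.6} and the functional equation \eqref{4.8} is delegated to the normalization argument on page 8 of Kac--Raina, whereas you carry out the coboundary absorption $L_n\mapsto L_n+\phi(n)c$ explicitly. (Both versions therefore really prove the lemma for a suitably normalized cocycle: taken literally, the ``only if'' direction fails for nonzero coboundaries $\alpha(m,n)=(m-n)\phi(m+n)$, but this imprecision is shared with, not introduced by, your write-up.) The genuine divergence is in showing $\alpha(m)=am+bm^3$ on all of $\textsl{O}_K$. The paper fixes an integral basis $b_1,\dots,b_s$, shows by a chain of specializations of \eqref{4.8} that $\alpha(b_1)$ and $\alpha(b_2)$ determine $\alpha$ on $\mathbb{Z}b_1+\mathbb{Z}b_2$, identifies the solution space there as the span of $m$ and $m^3$, and then patches the constants $a,b$ across basis elements through the common element $b_1$. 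You instead subtract the interpolant through $\alpha(1)$ and $\alpha(2)$ and play the $m=1$ and $m=2$ recurrences against each other, obtaining $12\,\tilde\alpha(l)=0$ uniformly for every $l\in\textsl{O}_K\setminus\{0,1,2\}$, with the three exceptional points killed by oddness and the normalization; I verified the constant, $(l+4)l(l-1)-(l-2)(l+3)(l+2)=12$. Your version buys uniformity --- no choice of basis, no propagation along $\mathbb{Z}$-spans, no matching of constants between overlapping spans --- and it isolates exactly the mechanism (a nonzero constant annihilator extracted from two commuting recurrences) that rigidifies the cocycle over the whole ring, which you rightly connect to the paper's ``no-go'' theme. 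The paper's version buys a structural intermediate --- the cocycle restricted to any rank-two sublattice lives in a two-dimensional space --- closer in spirit to the usual second-cohomology computation for the Witt algebra. Both arguments are complete.
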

\begin{remark}
Obviously \eqref {4.6} is a special case of \eqref{4.7}, but we include it here for convenience. Another issue to point out is that to define a complex Lie algebra like the above (with or without central extension) requires a choice of the embedding of $\textsl{O}_K$ into $\mathbb{C}$, however, both the Lie bracket and the relevant second Lie algebra cohomology here are natural, with respect to different choices of embeddings into $\mathbb{C}$. Also note that one may consider the question of the above lemma in a more general situation if one wants to.
\end{remark} 
\begin{proof}
The 'if' part is obvious. So we only prove for the 'only if' part. For the first part of the proof, we proceed as on page 8 of \cite{Kac}. In particular, we can get the first two properties of $\alpha(m,n)$,i.e., \eqref{4.5} and \eqref{4.6} in the same way, and also
\begin{equation}\label{4.8}
(m-n)\alpha(m+n)-(2n+m)\alpha(m)+(n+2m)\alpha(n)=0
\end{equation}
Now we pick any integral basis $b_1,b_2,...,b_s$ of $\textsl{O}_K$. We first show that $\alpha(b_1)$ and $\alpha(b_2)$ alone determine the value of $\alpha$ on the $\mathbb{Z}$ linear span of $b_1$ and $b_2$: from \eqref{4.8} we see that from $\alpha(b_1)$ and $\alpha(b_2)$ we can get the values of $\alpha$ also on $b_1+b_2$ and $b_1-b_2$. So by \eqref{4.8} again, these determine the values of $\alpha$ on $(b_1+b_2)+(b_1-b_2)=2b_1$ and $(b_1+b_2)-(b_1-b_2)=2b_2$. Now we put $m$ to be $mb_1$, and $n$ to be $b_1$ in \eqref{4.8}, we have
\begin{equation}\label{4.9}
(m-1)\alpha((m+1)b_1)=(m+2)\alpha(mb_1)-(2m+1)\alpha(b_1)
\end{equation} 
upon canceling the common factor $b_1$.

From the above equation and \eqref{4.6} we see that $\alpha(b_1)$ and $\alpha(2b_1)$ determine $\alpha(mb_1)$ for any $m\in\mathbb{Z}$, and the same for $b_2$. Again, for any $mb_1+nb_2$, the value of $\alpha$ on which is determined by $\alpha(mb_1)$ and $\alpha(nb_2)$, which in turn are determined by $\alpha(b_1)$ and $\alpha(b_2)$, as we have shown. 

On the other hand, we observe that $\alpha(m)=m$ and $\alpha(m)=m^3$ are two linearly independent solutions to \eqref{4.8}. So on $\mathbb{Z}$ linear span of $b_1$ and $b_2$, \eqref{4.7} has to be hold for some complex numbers $a$ and $b$.

Next we consider $b_1$ and $b_3$, the same argument deduces that on $\mathbb{Z}$ linear span of $b_1$ and $b_3$, \eqref{4.7} has to be hold for some complex numbers $a'$ and $b'$. So in particular, on the $\mathbb{Z}$ linear span of $b_1$, \eqref{4.7} has to be hold for $a$ and $b$, and also $a'$ and $b'$. So $a=a'$, $b=b'$. So \eqref{4.7} holds for $b_3$ for the same complex numbers $a$ and $b$. Extending this argument, we see \eqref{4.7} holds for any element in our chosen integral basis. Then by \eqref{4.8} again, we see it holds for any element in $\textsl{O}_K$.
\end{proof}
\begin{remark}
By choosing an appropriate additive semigroup inside $\textsl{O}_K$, one may define highest weight representations and Verma modules of this Lie algebra. However, after axiomatizing some of the essential ingredients of the Oscillator representation, one can show that oscillator-like representations do not exist if the field $K$ is not $\mathbb{Q}$. The details of these will not be treated in this paper.
\end{remark}   
\section{A physical Interpretation of Tate's Thesis}
Now it's time for us to end this paper by formulating a relation between Tate's thesis and conformal field theory as we promised. For introductory material on Tate's thesis, one can see for example \cite{Kudla}.

In \cite{Witten}, Witten formulated several quantum field theories on an (smooth, complete) algebraic curve over an algebraically closed field. Here we will try to formulate a simplest possible conformal field theory on an algebraic number field from a somewhat different point of view. We will use some ideas of \cite{Witten}, of course. We will take some analogues of these ideas and apply them to the case of number fields for guidance. We have no intention to make our discussion here rigorous or complete, however. Our goal here is to tentatively explore this possible connection between number theory and physics. We will see that much of Tate's thesis come out from physical considerations.

Let $K$ be a number field, and $\textsl{O}_K$ its ring of integers. Let $A_K$ be the ring of adeles, and $I_K$ the idele group, $C_K$ the idele class group, and $\textsl{I}_K$ the ideal class group. We denote by $\tau$ the diagonal embedding of $K^{\times}$ into $I_K$. We fix a global additive character $\psi$ of $A_K$, trivial on $K$. For any local embedding $F$ of $K$, let $d^{\times} x$ denote the multiplicatively invariant Haar measure on $F$ normalized so that the (local) units have volume $1$. Also we denote by $dx$ the self-dual additively invariant Haar measure with respect to the local component of $\psi$. By abuse of notation, we also denote by $d^{\times} x$ (and $dx$) the multiplicatively (and additively) invariant Haar measure on $I_K$ given by multiplying the local Haar measures. Now we will attempt to describe what one may call the $\operatorname{GL}(1)$ 'current group' on a number field.

First of all, for a commutative ring, we have at our hand the geometric object given by the prime spectrum of the ring, to be used to take analogue with the geometric case. For any place $v$ of $K$, local operators are in $\text{Hom}(\text{Spec} K_v, \operatorname{GL}(1))=\operatorname{GL}(1,K_v)$. By taking analogue of the discussion on multiplicative Ward identities in \cite{Witten}, if the local operator $f_v$ has negative valuation, then physically it corresponds to a positive energy excitation at $v$. So globally, quantum fields live in $\prod_{v}\operatorname{GL}(1,K_v)$, with the restriction that just like ordinary conformal field theory, for all but finitely many places $v$, $f_v$ lives in $\operatorname{GL}(1,O_v)$. So, in other words, quantum fields are elements of the idele group $I_K$.

Next, any two quantum fields differing by an element of $\tau(K^{\times})$ should be regarded as the same. We have reasons for imposing this requirement: one may consult section V of \cite{Witten}. Multiplying by elements of $\tau(K^{\times})$ is the analogue of conformal symmetry transformation.

So the path integral should be on the idele class group $C_K$. To integrate, we need a measure which should be an analogue of what physicists call the Feynman measure on the space of fields. In ordinary quantum field theory on flat spacetime, this (undefined) concept of Feynman measure should be translational invariant, which can be regarded as a consequence of the symmetries of flat spacetime. In our multiplicative case, the analogue of this is the requirement that the measure should be multiplicatively translational invariant. So this measure has to be the Haar measure on $C_K$ with an ambiguity of a scalar, which makes perfect sense.

Next, in path integral formulation of ordinary quantum mechanics and quantum field theory, expressions like $e^{iHt}$, $e^{i\int Ldt}$, or $e^{\int Ldx}$ show up essentially because of the Schrodinger equation, which itself can be regarded more or less as a consequence of the basic principles of quantum mechanics and the flat spacetime Lorentz symmetry. (There are many discussions on this issue, and we won't discuss it here. Note that the Schrodinger equation itself is not Lorentz invariant.) Here on the ideles, we have the multiplicative translational symmetry for the Haar measure, so what substitutes $e^{\int Ldx}$ in the path integral should be a multiplicative function on $C_K$ (Note that formally, $e^{iHt}$ is a quasicharacter on the additive group of $t$, which is a consequence of the Schrodinger equation.), which is nothing but $\omega\omega_s$ in general (on physics grounds, we assume that this function should be continuous. Furthermore we will provide physical interpretations of $\omega$ and $\omega_s$ in our coming paper on reciprocity laws), where $\omega$ is a Hecke character on $I_K$, and $\omega_s$ is the quasicharacter given by
\begin{equation}
\omega_s(x)=|x|^s
\end{equation}
for any $x\in I_K$. Where $s$ is a complex number.

Note that $\omega$ and $\omega_s$ can be factorized as products of local characters, and this is consistent with integrating the Lagrangian density over spacetime in ordinary quantum field theory (or over the worldsheet in two dimensional conformal field theory).\

Before we go any further, let us stop and make an observation which gives us a hint that our construction possibly can come from a gauge theory. For an ordinary $U(1)$ gauge theory, the path integral should sum over all possible $U(1)$ principal bundles over the base manifold. Here, we have the canonical isomorphism
\begin{equation}\label{4.11}
\mathrm{Pic}(\texttt{Spec} \textsl{O}_K)\cong\textsl{I}_K
\end{equation}
Where the Picard group $\mathrm{Pic}(\texttt{Spec} \textsl{O}_K)$ classifies the isomorphism classes of invertible sheaves on $\texttt{Spec} \textsl{O}_K$. In fact, our path integral on $C_K$ somehow sums over $\textsl{I}_K$:\\
In number theory, there is a canonically defined surjective group homomorphism from the idele class group to the ideal class group: 
\begin{equation*}
\pi: C_K\rightarrow \textsl{I}_K
\end{equation*}
with
\begin{equation*}
Ker\pi=I(S_{\infty})/\tau(R^{\times})
\end{equation*}
where
\begin{equation*}
I(S_{\infty})=\prod_{\text{archimedean places}} K_v^{\times}\times \coprod_{\text{nonarchimedean places}} R_v^{\times}
\end{equation*}
where $R_v^{\times}$ is the group of (local) units in $\textsl{O}_v$, and $R^{\times}$ is the group of global units of $\textsl{O}_K$. So integration over $C_K$ already includes a summation over $\textsl{I}_K$. $\pi$ refines the information in the ideal class group, whose usefulness is illustrated by global class field theory. For us, it's usefulness is revealed by the path integral.

Before we can write down the path integral, we still have to consider the insertion of local operators. In ordinary quantum field theory, we have expressions like
\begin{equation}
\int \phi(x)e^{\int L(\phi(x))d^Dx}D\phi
\end{equation}
However, it is hard to make sense of it unless one makes the inserted operators have good decaying properties, and thinks of the measure as a linear map from some space of functions to $\mathbb{R}$. See \cite{Borcherds} for discussions on this issue.

To integrate over $C_K$, the integrand should be functions on $C_K$. Of course, the insertion of local operators should carry appropriate physical meaning. To think about what is the form our insertion should look like, here we consult the form of Polyakov path integral. See for example, \cite{Polchinski}, equation (3.5.5): for the inclusion of a particle, one inserts in the path integral a local vertex operator given by the state-operator correspondence. Furthermore, to make the vertex operator insertions diff-invariant, one integrates them over the worldsheet.

To mimic this process of insertion of vertex operators, we start from an unkown function $f(x)$ on $I_K$ which is a product of local functions with suitable decaying properties, and carrying appropriate physical meaning. Then we sum over $K^{\times}$ to make it $K^{\times}$ invariant (so we insist that $f(x)$ should make the following sum convergent):
$$
\sum_{\alpha\in K^{\times}} f(\alpha x)
$$
\begin{remark}
From the above, it is not correct to say that $K^{\times}$ should be the analogue of the string worldsheet, since the Polyakov path integral is intended to calculate string S-matrices, whereas our path integral is to be regarded as a path integral in conformal field theory. Rather, it makes some sense to regard $\texttt{Spec} \textsl{O}_K$ as the analogue of the worldsheet. But archimedean places should also matter, so one should really say that the theory is to live on number fields.
\end{remark} 
Finally we can write down our path integral:
\begin{equation}
\int_{I_K/\tau(K^{\times})}\sum_{\alpha\in K^{\times}} f(\alpha x)(\omega\omega_s)(x)d^{\times}x
\end{equation}
Furthermore, note that $(\omega\omega_s)(\alpha x)=(\omega\omega_s)(x)$, for any $\alpha\in K^{\times}$. So the above equals
\begin{equation} 
\int_{I_K}f(x)(\omega\omega_s)(x)d^{\times}x
\end{equation}
which is exactly the global zeta integral $\textsl{z}(s,\omega;f)$ for the test function $f$. So we propose that the allowed functions should be in $S(A_K)$, the space of Schwartz-Bruhat functions on $A_K$.

Note that in the above integral, we have only one parameter $s$ which can be continuously varied. So it's tempting to regard $s$ as coming from the 'coupling constant'. We will see what this means as a coupling constant in our coming paper on reciprocity laws.

If we allow $f$ to vary, the global zeta integral $\textsl{z}(s,\omega)$ becomes a distribution, which is well known to be convergent for $\Re(s)>1$, and has a meromorphic analytic continuation to the whole $s$ plane and satisfies the functional equation
\begin{equation}\label{4.17}
\widehat{\textsl{z}(1-s,\omega^{-1})}=\textsl{z}(s,\omega)
\end{equation}
Where the global Fourier transform $\widehat{}$ is defined after we fix the additive character $\psi$ of $A_K$. We also have the functional equation for the complete global L function
\begin{equation} 
\Lambda(s,\omega)=\epsilon(s,\omega)\Lambda(1-s,\omega^{-1})
\end{equation}
which is independent of the choice of $\psi$.

\eqref{4.17} tells us that we can use analytic continuation to define our quantum theory for any value of the coupling constant $s$. We will make use of this fact in our paper discussing reciprocity laws. 
\begin{remark}
If we switch from number fields to global function fields (namely, function fields over a finite field), since Tate's thesis works for both cases, all the above discussion is essentially valid, except that we don't need to worry any more about archimedean places, and also we don't need to take analogues between number fields and function fields. (For the global function field case, we also have a canonical group homomorphism from the idele class group to the divisor class group, which should replace our discussion above around \eqref{4.11}. ) It is interesting to note that \cite{Witten} discusses quantum field theories on curves over an algebraically closed field, where Witten uses algebraic constructions relying on the algebraically closedness of the ground field, and he also remarks: " While one would wish to have an analogue of Lagrangians and quantization of Lagrangians in this more general setting, such notions appear rather distant at present." On the other hand, if we apply our discussion to global function field case, we are actually discussing conformal field theory on curves over a finite field. What we were trying to do, was just to write down a path integral which mimics a path integral in ordinary quantum field theory. But our discussion is not valid for curves over algebraically closed fields.
\end{remark}
\section*{Acknowledgments}
I would like to thank my advisor Richard Borcherds for his advising throughout the years, and in particular on this paper. I also thank Chul-hee Lee, Shenghao Sun and John Baez for helpful discussions and correspondence. Lastly but importantly, I thank Royce Cheng-Yue for carefully catching grammatical errors. I am responsible for all possible remaining errors or typos.

\bibliographystyle{amsplain}

\end{document}